\documentclass[conference]{IEEEtran}
\usepackage{subfig}
\usepackage{stfloats}
\IEEEoverridecommandlockouts
\newcommand{\nonl}{\renewcommand{\nl}{\let\nl\oldnl}}% Remove line number for one line
\usepackage{etoolbox}
\AtEndEnvironment{example}{\null\hfill$\square$}%
\AtEndEnvironment{remark}{\null\hfill$\square$}%

\usepackage{enumerate}
\usepackage{enumitem}
\usepackage{multirow,array}
\usepackage{cite}
\usepackage{graphicx}
\usepackage{psfrag}
\usepackage{array}
\usepackage{algorithm2e}
\usepackage{amssymb}
\usepackage{amsfonts}
\usepackage{float}
\usepackage{textcomp}
\usepackage{xcolor}
\usepackage{tabu}
\usepackage{array}
\usepackage{tabularx}
\usepackage{subfiles}

\usepackage{amssymb}% http://ctan.org/pkg/amssymb
\usepackage{pifont}% http://ctan.org/pkg/pifont
\usepackage[amsthm,thmmarks]{ntheorem}

\newcolumntype{P}[1]{>{\centering\arraybackslash}p{#1}}
\newcolumntype{M}[1]{>{\centering\arraybackslash}m{#1}}
\usepackage{multirow}
\usepackage{booktabs}

\usepackage{amsmath}
\usepackage{float}
\newtheorem{proposition}{Proposition}
\newtheorem{conjecture*}{Conjecture}

\newtheorem{corollary}{Corollary}

\newtheorem{remark}{Remark}
\newtheorem{definition}{Definition}
\newtheorem{theorem}{Theorem}

\newcommand{\pkarxiv}[1]{}
\newcommand{\mileak}{\rho^{\mathsf{MIL}}}
\newcommand{\maxleak}{\rho^{\mathsf{MaxL}}}

\newcommand{\Prob}{\mathsf{P}}

\newcommand{\comment}[1]{}

\newcommand{\graph}{\mathcal{G}}
\newcommand{\vertset}{\mathcal{V}}
\newcommand{\edgeset}{\mathcal{E}}
\newcommand{\indset}[1]{\mathcal{I}_{#1}}
\newcommand{\Wtheta}{W_{\theta}}
\newcommand{\query}[1]{Q_{#1}}
\newcommand{\querydown}[1]{Q_{#1}^{\downarrow}}
\newcommand{\queryup}[1]{Q_{#1}^{\uparrow}}
\newcommand{\degdown}{d_n^{\downarrow}}
\newcommand{\degup}{d_n^{\uparrow}}
\newcommand{\degn}{d_n}
\newcommand{\Un}{U_n}
\newcommand{\F}{\mathbb{F}_2}
\newcommand{\xor}{\oplus}
\newcommand{\hw}[1]{w(#1)}
\newcommand{\Hb}{H_{\mathrm{b}}}
\newcommand{\ebk}{\mathbf{e}_k}
\newcommand{\KN}{K_N}
\DeclareMathOperator{\MI}{I}  % mutual information

\title{Weak Private Information Retrieval\\ for Graph-based Storage} 

\begin{document}
%%%%%%%%%%%%%%%%%%%%%%%%%%%%%%%%%%%%%%%%%%%%%%%%%%%%%%%%%%%%
% \author{
%   \IEEEauthorblockN{Authors}

% \vspace{-0.2cm}
% %%%%%%%%%%%%%%%%%%%
% }

\author{Shodasakshari Vidya, Chandan Anand, Prasad Krishnan % <-this % stops a space
%\thanks{\hrule}%

\thanks{Shodasakshari, Chandan, and Dr. Krishnan are with the Signal Processing and Communications Research Center, International Institute of Information Technology, Hyderabad, 500032, India (email: $\{$shodasakshari.vidya@research., chandan.anand@research., prasad.krishnan@$\}$iiit.ac.in). 
%Acknowledgment: \chandan{Acknowledge support details.}\pk{This is done only after acceptance}
% Shodasakshari Vidya, Chandan Anand and  Dr. Krishnan acknowledge support from ANRF-SERB project CRG/2023/008696 and ANRF/ARGM/2025/000480/TS. Dr. Kurri acknowledges support from ANRF project ANRF/ECRG/2024/005472/ENS. 
}
}

\maketitle

\allowdisplaybreaks

\begin{abstract}
  A  distributed storage system with graph-based replication consists of a collection of databases and the files they contain. The databases (or servers) are represented as the vertices of a graph, while each file is stored in a distinct pair of servers and is represented by an edge of this graph. Private information retrieval (G-PIR) on such a graph-based storage system involves a client which seeks to retrieve a desired file via a query-response protocol, without leaking the identity of the desired file index to any database. The goal of G-PIR is to maximize the rate (reciprocal of the total normalized download) under the privacy constraint. Prior work on G-PIR has involved perfect information-theoretic privacy (i.e., null leakage). However, if the privacy constraint is relaxed, then PIR protocols could be designed that have even higher rates. We term such protocols as Graph-based Weak Private Information Retrieval (G-WPIR) protocols and initiate their formal study in this work. We propose a G-WPIR scheme for arbitrary graphs, and identify the trade-offs it achieves between rate and privacy, under two well known leakage metrics: mutual information leakage and maximal leakage. Our protocol employs minimal subpacketization (representing a file-size constraint) and employs a simple probabilistic query realization to obtain the smooth trade-off. We extend this protocol with some modifications to two special classes of graphs, the complete graphs and the complete bipartite graphs, and identify the corresponding rate-privacy trade-offs achieved. 
% This work initiates the study of  file stored exactly at two databases and relax the privacy constraint in a controlled manner. We propose two achievability schemes: $1)$ a G-WPIR scheme for general graphs and $2)$ a G-WPIR scheme for complete and complete bipartite graphs. Additionally, we establish their rate-privacy trade-offs. In the G-WPIR scheme for complete graphs, we proposed a technique to improve the rate-privacy trade-offs.
\end{abstract}
% {\small Due to space considerations, this submitted manuscript is a short version of our extended manuscript \cite{Vidya_ISITA2026_arxiv}, which contains missing proofs, examples, and further results for the special graph classes.}

\section{Introduction}
In Private Information Retrieval (PIR), a client desires to download a file from a set of distributed servers while keeping the identity of the desired file private from each server. To retrieve the client's desired file privately, the client uses a query-response protocol, which ensures such private retrieval is called a PIR protocol. A PIR protocol downloads extra file bits to mask the desired file bits, thus achieving the desired privacy. The efficiency of a PIR protocol is measured using its \textit{rate}, which is the ratio of the size of the desired file to the total number of downloaded bits in the protocol. For a given system model, the \textit{capacity} of PIR is the supremum of the rate over all PIR protocols. A large body of work exists on PIR, with a considerable focus in the last several years in the information-theoretic privacy setting, starting from the seminal work by Sun and Jafar for non-colluding and colluding replicated databases \cite{sun2017capacity,sun2017_T_capacity}, including coded storage \cite{banawan2018capacity}, and low subpacketization (or file-size) schemes \cite{tian2019capacity, zhou2020capacity}.
% Some interesting extensions for the classical PIR problem are investiged in evesdroppers/Byzantine servers \chandan{Secure pir from colluding databases with eavesdroppers, The capacity of PIR with eavesdroppers, On PIR and symmetric PIR from colluding databases with adversaries and eavesdroppers, The capacity of private information retrieval from Byzantine and colluding databases}

% Consider a system model with $K$ files that are stored in $N$ servers in a distributed manner. If these servers do not communicate with each other, they are called the collusion-free servers; otherwise, they are called colluding servers. PIR protocols can be classified into two categories based on per-server download cost: fixed-download PIR protocols, where the number of downloaded bits is constant across all servers, and variable-download PIR protocols, where the number of downloaded bits may vary across servers. However, for a system model, both fixed-download scheme \cite{sun2017capacity,banawan2018capacity,sun2017_T_capacity}, and variable-download scheme \cite{tian2019capacity, zhou2020capacity} achieve the capacity.

Any distributed storage can be modeled as (hyper)graph-based storage, where the servers are represented as vertices and files stored across any two or more servers are represented as edges or hyperedges, respectively. PIR for such graph-based storage (which we term as G-PIR) was first studied in \cite{ravivetal_2018_TIT}, in which achievable schemes as well as converse results were presented, thus establishing lower and upper bounds on the capacity of G-PIR for graph replication colluding and non-colluding setups. G-PIR for storage based on simple graphs (where each file is stored on a distinct set of two servers, and no two servers share more than a single file) was further enhanced in \cite{Sadehetal_TIT_2023_BoundsonPIRgraphs,ge2025privateinformationretrievalgraphs,Kong_TIT_2025newcapacityboundspir,Yuhangetal_Capof4StarGraph_ISIT_2023}. In particular, these works presented results for G-PIR on various special classes of graphs such as star graphs, bipartite graphs, path graphs, and complete graphs. Any protocol for the complete graph $K_N$ on $N$ vertices can be extended to a protocol for an arbitrary $N$-vertex graph, and thus the lower bounds for $K_N$ generalize to all graphs. Further extensions include G-PIR for multi-graph-based storage \cite{meel2025privateinformationretrievalmultigraphbased} in which two servers can share more than a single file, as well as symmetric G-PIR schemes \cite{meel2025effectcommonrandomnessreplication}. 

Recently, a new class of G-PIR schemes for general graphs and star-graphs was introduced in \cite{shanbhag2026private}. The G-PIR scheme for general graphs presented in that work \cite[Section IV] {shanbhag2026private} employs a probabilistic query generation protocol, requiring minimal \textit{subpacketization} $L=1$ (number of parts that the files must be subdivided into, for protocol execution). The setup of the protocol requires a pre-processing step, which is a decomposition of the storage graph into independent sets. For a graph with independence number $\alpha(\graph)$, this protocol achieves a rate lower bounded by $1/(N-\alpha(\graph)/2)$ for any graph $\graph$ with $N$ vertices (databases) and independence number $\alpha(\graph)$, and exactly $1/(N-1)$ for the complete graph $K_N$. This protocol achieves the best-known rate for balanced bipartite graphs, while higher-rate protocols are known \cite{ge2025privateinformationretrievalgraphs} for complete graphs.

% \subsection{Special graphs and State-of-the-art-bounds}
% Specific graph structures have been extensively studied in the literature.  For a complete graph denoted as $\KN$, where each server shares exactly one file with another server, so a total of $N(N-1)/2$ files. Initially an achievable scheme provided by \cite{Sadehetal_TIT_2023_BoundsonPIRgraphs,kong2025newcapacityboundspir} where as state-of-the-art upper bound is presented in \cite{ge2025privateinformationretrievalgraphs} which is $R\le\frac{1}{\sum_{i=2}^N\frac{1}{i!}}\frac{1}{N}$.

% For a complete bipartite graph $K_{N_1, N_2}$ where $N_1$ and $N_2$ are the number of vertices of two subsets, where we consider $N_1>N_2$ and each vertices of a subset have an edge with each vertex of another subset. The achievable rate for this graph is given by \cite{kong2025newcapacityboundspir,ge2025privateinformationretrievalgraphs} and the best upper bound is presented in \cite{shanbhag2026private}, which is rate $R \geq \frac{4}{3N}$.

Relaxing the perfect privacy constraint in PIR may enable achieving larger rates, albeit with some information leakage at the servers. Such schemes are termed weak-PIR (WPIR) schemes. The study of WPIR was initiated in multiple lines of work \cite{lin2021multi,jia2019capacity,samy2021asymmetric}. In these and subsequent works, various leakage metrics were used to measure the privacy leakage about the desired file index, including: mutual information leakage \cite{lin2021multi}, maximal leakage \cite{issa2019operational}, worst-case information leakage \cite{kopf2007information}, differential privacy \cite{samy2021asymmetric}, and converse induced privacy metric \cite{CIPM2024}. These leakage measures have previously been adopted in various WPIR settings, such as colluding-free replicated servers \cite{Anandetal_ISIT2025_WPIR, anand2026converseboundssunjafartypeweak}, MDS-coded storage servers \cite{orvedal2024weaklyprivateinformationretrievalmdscoded, Anandetal_ISIT2025_WPIR, anand2026converseboundssunjafartypeweak}, and colluding servers \cite{Anandetal_ISIT2025_WPIR, anand2026converseboundssunjafartypeweak}. 

In this work, we propose a new WPIR scheme for graph-based storage, adopting both mutual information leakage and maximal leakage as our privacy metrics. The contributions of this work are as follows:
\begin{itemize}[leftmargin=*]
    \item Section~\ref{sec:model} contains preliminaries, system model, and G-WPIR basics. Subsection \ref{subsec:main_results} contains the statements of the main results of this work. Theorem \ref{thm:mi_rate} describes the rate-privacy tradeoff for arbitrary graphs, with respect to the mutual information leakage metric. Theorem \ref{thm:maxl_rate} gives the rate-privacy tradeoffs under the maximal leakage metric. 
    \item In Section \ref{sec:wpirgraphs}, we present our G-WPIR scheme, which achieves the rate-privacy tradeoffs mentioned in Theorems \ref{thm:mi_rate} and \ref{thm:maxl_rate}. This protocol is a natural weaker version of the general graph-PIR scheme proposed in \cite[Section IV]{shanbhag2026private}. We show the Theorems \ref{thm:mi_rate} and \ref{thm:maxl_rate} by obtaining the rate achieved by this protocol, and the corresponding leakage metrics.  
    \item In Section~\ref{sec:special_graphs}, we consider two special classes of graphs: complete and complete bipartite graphs. We construct G-WPIR schemes for these, using the protocol in Section \ref{sec:wpirgraphs} and its refinements. We show the corresponding rate-privacy tradeoffs for these graphs. 
\end{itemize}
In summary, this work initiates the formal study of WPIR for graph-based storage, extending results in prior work for WPIR in other models of distributed storage \cite{zhouetal_2020_WPIR_MaxL, lin2021multi, Anandetal_ISIT2025_WPIR}. We end the paper in Section \ref{sec:conclusion} with directions for further work. 

\textit{Notation}: Let $N$ be a positive integer then $[N]$ denotes $\{1,2,\cdots,N\}$. Consider $\indset{}\subset[N]$ and a set $\mathbf{a}=(a_1, a_2, \cdots, a_n)$ then we define $\mathbf{a}_{\indset{}}=\{a_i:i\in\indset{}\}$. For disjoint sets $A$ and $B$, their disjoint union is denoted as $A\sqcup B$. The symbols $\boldsymbol{0}$ and $\boldsymbol{1}$ denote all-zero and all-one vectors over $\F$ of appropriate lengths, understood from the context. For $q\in\F^{n}$, let $\hw{q}$ denote its Hamming weight. The symbol $\xor$ denotes the logical XOR operator. A biased coin toss is modeled by a Bernoulli random variable $X\sim\mathsf{Bern}(p)$, where $P(X=1)=1-P(X=0)=p$. The case $p=1/2$ corresponds to a fair coin. Let $\log$ denote the base-$2$ logarithm. The binary entropy function is denoted by $H_b(p)=-p\log(p)-(1-p)\log(1-p)$. For a random variable $X$, $\mathbb{E}[X]$ denotes expectation with respect to its underlying distribution of $X$. An (undirected) graph $\graph$ is composed of vertices ${\cal V}$ and edges ${\cal E}$. An independent set of a graph is a collection of vertices, no two of which share an edge. The size of the largest independent set of a graph is termed its independence number, denoted by $\alpha(\graph)$.

\section{System model and Main Results}
\label{sec:model}
% ----------------------------------------------------------
\subsection{System Model}
\label{subsec:system_model}

\subsubsection{Storage Model}

We consider a distributed storage system consisting of $N$ servers, indexed by the vertex set $\vertset = [N] \triangleq \{1, 2, \ldots, N\}$ of a simple, undirected graph $\graph = (\vertset, \edgeset)$. The edge set $\edgeset$ defines the replication structure, where each edge $\{i, j\} \in \edgeset$ represents a distinct file $W_{i,j}$ stored jointly on servers $i$ and $j$. Server $n$ stores all files $\{W_{n,m} : \{n,m\} \in \edgeset\}$, which we also write as $W_n^1, W_n^2, \ldots, W_n^{\degn}$, where $\degn = |\{m : \{n,m\} \in \edgeset\}|$ denotes the degree of vertex $n$. 
% \pk{PROBLEMATIC NOTATION BECAUSE it is $e_j$ and not $e_{j,n}$ MOVING HERE FROM NOTATION SECTION WHERE IT IS INAPPROPPRIATE TO HAVE. MAYBE TO RESOLVE LATER: We also use the symbol $e_j$ to refer to the edge in $\edgeset$ corresponding to the $j$-th upstream edge incident to server $n$.} 
We denote the total number of files by $K = |\edgeset|$. All files are assumed to be independent and uniformly distributed, each of entropy $H(W_{i,j}) = L$ bits, where $L \geq 1$ denotes the \emph{subpacketization} of the scheme. Throughout this work, we restrict attention to the unit-subpacketization setting, i.e., $L = 1$.
%-------------------------------------------------------------------
\subsubsection{PIR Protocol}
\label{subsec:pir_protocol} 
 
A client wishes to privately retrieve a target file $\Wtheta$, where the file index $\theta$ is drawn uniformly at random from the edge set $\edgeset$. The client communicates with the servers via a \emph{query-response} protocol: it sends a query $\query{n}$ to each server $n \in \vertset$, and receives a corresponding response $A_n$. The query $\query{n}$ is a function of $\theta$ and private client-side randomness, independent of the stored files. 

%\smallskip
% \noindent\textbf{Server responses.}
% Server $n$ stores all files $\{W_{n,m} : \{n,m\} \in \edgeset\}$, which we also write as $W_n^1, W_n^2, \ldots, W_n^{\degn}$, where $\degn = |\{m : \{n,m\} \in \edgeset\}|$ denotes the degree of vertex $n$. 
The query to server $n$ takes the form of a binary vector $\query{n} = (f_1^n, f_2^n, \ldots, f_{\degn}^n) \in \F^{\degn}$. Upon receiving a non-zero query, server $n$ returns the linear combination 
\begin{equation}
    A_n \;=\; \bigoplus_{j=1}^{\degn} f_j^n \, W_n^j,
    \label{eq:server_response}
\end{equation}
where $\xor$ denotes addition over $\F$. The query $\query{n} = \mathbf{0}$ is interpreted as a null query, eliciting no response. 

%\smallskip
% \noindent\textbf{Correctness.}
A PIR protocol must recover the client's desired file
% is \emph{correct} if the client can recover 
$\Wtheta$ from the
received responses, i.e.,
\begin{equation}
    H\!\left(\Wtheta \mid \query{[N]},\, A_{[N]}\right) = 0.
    \label{eq:correctness}
\end{equation}
% \\
% \noindent\textbf{Rate.}
The \emph{rate} of a variable-download PIR protocol is defined as
\begin{equation}
    R \;\triangleq\; \frac{L}{\mathbb{E}[D]},
    \label{eq:rate}
\end{equation}
where $L = 1$ is the subpacketization and $\mathbb{E}[D] = \sum_{n \in \vertset} H(A_n \mid Q_n)$ is the expected total download. The \emph{capacity} $C(\graph)$ is the supremum of $R$ over all correct PIR protocols.
%-------------------------------------------------------------------
\subsection{Strict and Weak Privacy}
\label{subsec:privacy}

The classical (strict, or information-theoretically perfect) privacy requirement in PIR demands that no individual server learns any information about the desired file index $\theta$ from its query-response pair:
\begin{equation}
    \MI(\theta;\, \query{n}, A_n) = 0, \quad \forall\, n \in \vertset.
    \label{eq:strict_privacy}
\end{equation}

% \pk{TO INCLUDE REFERENCES, Chandan can help perhaps: Information-theoretically perfect PIR protocols for graph-based storage was initiated in \cite{} and further developed for various general and specific graph classes in a sequence of works \cite{}.}\chandan{Will search and add more references for G-PIR variants.} 
Information-theoretic perfect PIR protocols for graph-based storage were initiated in \cite{ravivetal_2018_TIT} and further developed for various general and specific graph classes in a sequence of works \cite{Sadehetal_TIT_2023_BoundsonPIRgraphs, Kong_TIT_2025newcapacityboundspir, Yuhangetal_Capof4StarGraph_ISIT_2023, ge2025privateinformationretrievalgraphs}.
In this work, we study \emph{weak} (or $\varepsilon$-private) PIR protocols, in which each server is permitted to learn a bounded but nonzero amount of information about $\theta$. We adopt two 
% complementary \pk{May 11: its not clear to me what exactly you want to say by the word `complementary'.} 
leakage metrics to quantify this.
 
\begin{definition}[Per-server Mutual Information Leakage]
\label{def:mi_leakage}
The \emph{mutual information (MI) leakage} at server $n$ is defined as
\begin{equation}
    \varepsilon_n^{\mathrm{MI}} \;\triangleq\; \MI\!\left(\theta;\, \query{n}, A_n\right).
    \label{eq:mi_leakage}
\end{equation}
A protocol is \emph{$\varepsilon$-MI-private} if $\varepsilon_n^{\mathrm{MI}} \leq \varepsilon$ for all $n \in \vertset$.
\end{definition}
 
\begin{definition}[Per-server Maximal Leakage]
\label{def:max_leakage}
The \emph{maximal leakage} \cite{issa2019operational} at server $n$ is defined as
\begin{equation}
    \varepsilon_n^{\mathrm{max}} \;\triangleq\; \log \sum_{q_n\in\mathcal{Q}_n}
        \max_{m \in [K]}\, P_{Q_n \mid \theta}(q_n \mid m),
    \label{eq:max_leakage}
\end{equation}
where the sum ranges over all realizations $q_n$ of $\query{n}$. A protocol is \emph{$\varepsilon$-maximally-private} if $\varepsilon_n^{\mathrm{max}} \leq \varepsilon$ for all $n \in \vertset$.
\end{definition}
% In addition to mutual information leakage and maximal leakage, other leakage measures, differential privacy, and worst-case information leakage were used to quantify leakages in various WPIR settings.
These leakage measures have been adopted in various WPIR settings before, such as %. For instance, these leakages have been used in 
non-colluding and replicated servers \cite{Huangetal_ISIT2024_WPIR_HetTrustServers, zhao2025optimizingDP, anand2026converseboundssunjafartypeweak}, for MDS-coded storage servers \cite{orvedal2024weaklyprivateinformationretrievalmdscoded, anand2026converseboundssunjafartypeweak} and colluding servers \cite{anand2026converseboundssunjafartypeweak}. In this work, we adopt these metrics to present new protocols for PIR with graph-based storage.
%\pk{COMPLETE THIS SENTENCE, Chandan can help: These leakage measures have been adopted in various PIR settings before, such as non-colluding servers \cite{}, server-collusion \cite{},etc. In this work, we adopt these metrics to present new protocols for PIR with graph-based storage.}  
%--------------------------------------------------------------------
%--------------------------------------------------------------------
\subsection{Main Results}
\label{subsec:main_results}
In this subsection, we present the main results of this work, which new G-WPIR achievable rate-privacy tradeoffs with respect to the mutual information and maximal leakage metrics, for arbitrary graph-based storage. The new scheme which achieves these tradeoffs is a natural extension of the scheme for complete graphs available in \cite{shanbhag2026private}[Section IV]. Towards explaining our protocol, we first explain the pre-processing step used in \cite{shanbhag2026private} that obtains a partition of the graph into independent sets.  
\subsubsection{Preliminaries}
\label{subsec:preliminaryindepsetsandquerynotation}
% \noindent\textbf{Graph Partition:}
%\label{subsec:partition}
Following \cite{shanbhag2026private}, our protocols rely on a partition of the vertex set $\vertset$ into $\kappa$ disjoint independent sets $\indset{1}, \indset{2}, \ldots, \indset{\kappa}$, satisfying $\vertset = \bigsqcup_{s=1}^{\kappa} \indset{s}$. 
% \pk{To be added to Notation: For disjoint sets $A,B$, their disjoint union is denoted as $A\sqcup B$}\chandan{Done}. 
This partition is constructed sequentially: for each $s \in [\kappa]$, the set
$\indset{s}$ is taken as a largest  
independent set of the induced subgraph $\graph[\indset{s} \cup \cdots \cup \indset{\kappa}]$. We refer to this as the \emph{sequential independent-set partition}\footnote{The execution of the protocol does not depend on the assumption of $\indset{s}$ being the \textit{largest} independent set at that stage. It is sufficient to consider maximal independent sets at each step instead. We can construct such a decomposition in polynomial time via a greedy algorithm. The rate achieved will depend on the size of these independent sets; specifically, the quantity $\alpha(\graph)$ appearing in~\eqref{eq:download_exact} and~\eqref{eq:rate_lb} below is $|\indset{1}|$ in general, and equals $\alpha(\graph)$ precisely when $\indset{1}$ is chosen as a largest independent set of $\graph$, as in the sequential construction above.} of $\graph$.

% \noindent\textbf{Query Structure:}
For a server $n \in \indset{s}$, the edges incident on $n$ are classified according to the partition. The \emph{downstream degree} $\degdown$ of vertex $n$ is the number of edges
connecting $n$ to a vertex in $\indset{s+1} \cup \cdots \cup \indset{\kappa}$, while the \emph{upstream degree} $\degup = \degn - \degdown$ is the number of edges connecting $n$ to a vertex in $\indset{1} \cup \cdots \cup \indset{s-1}$. Note that $\degup \geq 1$ for all $n \notin \indset{1}$, since otherwise $n$ could have been absorbed into an earlier independent set, contradicting its maximality. 
% \pk{This observation may be removed in shorter version if its not used}. 
For each upstream edge $e_j$ incident on server $n$, we write $e_j \in \edgeset$ for the corresponding file index, where $j \in [\degup]$.
 
Accordingly, the query vector $\query{n}$ is decomposed as
\begin{equation}
    \query{n} \;=\; \left(\querydown{n},\, \queryup{n}\right),
    \label{eq:query_decomp}
\end{equation}
where $\querydown{n} \in \F^{\degdown}$ contains the weights for downstream edges and $\queryup{n} \in \F^{\degup}$ contains the weights for upstream edges. For servers in $\indset{1}$, the upstream subvector is  non-existent, and $\query{n}$ consists entirely of $\querydown{n}$.

\subsubsection{Leakage-Rate Tradeoffs}
The proofs of the following theorems are given in Section~\ref{sec:wpirgraphs}.

\begin{theorem}[MI Leakage and Rate Tradeoff]
\label{thm:mi_rate}
% \pk{May 12: Introduce $U_n$ here or previously}
According to the independent-set decomposition $\vertset = \bigsqcup_{s=1}^{\kappa} \indset{s}$, the per-server MI leakage satisfies
\begin{equation}
    \varepsilon_n^{\mathrm{MI}} \;=\;
    \begin{cases}
        0, & n \in \indset{1}, \\[4pt]
        H\!\left(\queryup{n}\right) - \Un\cdot\Hb(p),
            & n \in \indset{s},\; s \geq 2,
    \end{cases}
    \label{eq:mi_leakage}
\end{equation}
where
\begin{equation}
    H\!\left(\queryup{n}\right) = -\!\sum_{q \in \F^{\Un}}
    P\!\left(\queryup{n} = q\right) \log P\!\left(\queryup{n} = q\right),
    \label{eq:H_up}
\end{equation}
with the marginal $P(\queryup{n} = q)$ given by~\eqref{eq:P0}--\eqref{eq:marginal}. For servers in $\indset{\kappa}$, the formula holds with $\Un = \degn$ (since $\degdown = 0$). Also, the biased protocol with parameter $p \in [1/2, 1]$ achieves rate $R(p) \triangleq 1/\mathbb{E}[D(p)]$, where the expected download is
\begin{align}
    \mathbb{E}[D(p)]
    &= |\indset{1}|\cdot(1-p)
    \notag \\
    &\;\;+ \sum_{s=2}^{\kappa-1}\sum_{n \in \indset{s}}
      \!\!\left[1 - \frac{p^{\Un+1}}{K}\!\left(K -
      \frac{\Un(2p-1)}{p}\right)\right]
    \notag \\
    &\;\;+ \sum_{n \in \indset{\kappa}}
      \!\!\left[1 - \frac{p^{\Un}}{K}\!\left(K -
      \frac{\Un(2p-1)}{p}\right)\right].
    \label{eq:download_exact}
\end{align}
This expression holds for any sequential independent-set partition satisfying the degree properties of Subsection~\ref{subsec:preliminaryindepsetsandquerynotation}, whether each $\indset{s}$ is chosen as a largest or merely a maximal independent set (Footnote~1). Under the primary construction, in which $\indset{1}$ is itself a largest independent set of $\graph$, $|\indset{1}| = \alpha(\graph)$, recovering the form used in the remainder of the paper.
\end{theorem}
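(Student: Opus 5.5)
The plan is to prove both parts of Theorem~\ref{thm:mi_rate} directly from the conditional law of each query $\query{n}$ given $\theta$, as generated by the biased protocol of Section~\ref{sec:wpirgraphs}. The first step removes the responses from the leakage. Since the files are independent of $(\theta,\query{n})$ and $A_n$ is a deterministic function of $\query{n}$ and the files, we have $\MI(\theta;A_n\mid \query{n})=0$, so $\varepsilon_n^{\mathrm{MI}}=\MI(\theta;\query{n})$. Next we use the split $\query{n}=(\querydown{n},\queryup{n})$ from \eqref{eq:query_decomp}. I expect the protocol to make $\querydown{n}$ a fresh random vector, independent of $\theta$ and of $\queryup{n}$, so that $\MI(\theta;\query{n})=\MI(\theta;\queryup{n})$. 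For $n\in\indset{1}$ there is no upstream part, which gives zero leakage immediately.

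The second step evaluates $\MI(\theta;\queryup{n})=H(\queryup{n})-H(\queryup{n}\mid\theta)$ for $n\in\indset{s}$ with $s\ge 2$. Here $\Un$ is the number of upstream coordinates ($\Un=\degup$, and $\Un=\degn$ on $\indset{\kappa}$). The key claim is that, conditioned on $\theta=m$, the $\Un$ upstream coefficients are independent. Each coefficient is the upstream neighbour's own Bernoulli coin on the shared edge, with bias $p$ or $1-p$, and it is flipped exactly when the shared edge equals the desired edge $m$. Since $\Hb(p)=\Hb(1-p)$, the flip does not change the entropy, so $H(\queryup{n}\mid\theta=m)=\Un\Hb(p)$ for every $m$. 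This yields \eqref{eq:mi_leakage}, with $H(\queryup{n})$ computed from the marginal obtained by averaging over $\theta$. The main obstacle is justifying this conditional independence, and I would attack it first. It requires that distinct upstream edges of $n$ go to coins at distinct upstream vertices, or to distinct independent coins, and that no other client randomness correlates them. I would argue this using the independent-set structure: each upstream edge of $n$ has its other endpoint in an earlier set $\indset{t}$, $t<s$, and the coefficient on that edge is fixed by that endpoint's independent coin for the edge.

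The third step computes the download. With $L=1$ and uniform files, any nonzero query returns one uniform bit and a zero query returns nothing, so $\mathbb{E}[D]=\sum_n \Prob(\query{n}\neq\mathbf{0})$. For $n\in\indset{1}$, I expect the protocol to make the query zero with probability $p$, which gives the term $|\indset{1}|(1-p)$. For $n\in\indset{s}$ with $2\le s\le\kappa-1$, condition on $\theta$. With probability $(K-\Un)/K$ the desired edge is not upstream of $n$, and then all $\Un$ upstream coordinates plus the downstream block are zero with probability $p^{\Un}\cdot p$. With probability $\Un/K$ the desired edge is upstream of $n$, and then one factor $p$ is replaced by $1-p$. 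Hence
\[
\Prob(\query{n}=\mathbf{0})=\frac{p^{\Un+1}}{K}\left[(K-\Un)+\Un\frac{1-p}{p}\right]=\frac{p^{\Un+1}}{K}\left(K-\frac{\Un(2p-1)}{p}\right).
\]
For $n\in\indset{\kappa}$ there is no downstream factor, which gives exponent $\Un$. Summing these probabilities yields \eqref{eq:download_exact}.

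Finally, nothing in the argument uses maximality of $|\indset{s}|$ beyond the fact that $\degup\ge 1$ outside $\indset{1}$. So the expression holds for any maximal-set partition, and $|\indset{1}|=\alpha(\graph)$ when $\indset{1}$ is chosen as a largest independent set.
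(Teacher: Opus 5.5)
Your proposal follows the paper's proof essentially step for step. You discard $A_n$ because the files are independent of $(\theta,Q_n)$, and you use the fresh coin $C_n$ to reduce the leakage to $I(\theta;Q_n^{\uparrow})$. You obtain $H(Q_n^{\uparrow}\mid\theta)=U_n H_b(p)$ from the independence of the upstream neighbours' coins together with $H_b(p)=H_b(1-p)$. These coins are distinct because the graph is simple, and independent of $\theta$ because each is that neighbour's downstream coin. Finally, you compute the null-query probabilities by conditioning on whether $\theta\in\mathcal{U}_n$.

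The one thing to correct is your closing generality claim, and it is a defect you share with the paper. Your download computation for $2\le s\le\kappa-1$ includes the factor $p$ from a downstream coin, so it silently assumes $d_n^{\downarrow}\ge 1$ for every vertex in the middle layers. The property $d_n^{\uparrow}\ge 1$ that you cite is in fact never needed, since all the formulas remain valid with $U_n=0$. No sequential partition, largest or maximal, guarantees $d_n^{\downarrow}\ge 1$ outside $\mathcal{I}_\kappa$.

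For example, take a triangle $xyz$ with the pendant path $x$--$a$--$c$--$b$. Every largest independent set is $\{a,b,y\}$ or $\{a,b,z\}$. The vertex $c$ is then isolated in the remaining graph, so it lies in $\mathcal{I}_2$, and the leftover edge forces $\kappa=3$. Thus $c\in\mathcal{I}_{\kappa-1}$ has $U_c=2$ and no downstream edge. The protocol draws no coin for $c$, so $P(Q_c=\mathbf{0})=P(Q_c^{\uparrow}=\mathbf{0})$, which is the $\mathcal{I}_\kappa$-type term with exponent $U_c$, not $U_c+1$.

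The download sum therefore has to be split by $d_n^{\downarrow}>0$ versus $d_n^{\downarrow}=0$, not by layer index. As stated, the expression overestimates $\mathbb{E}[D(p)]$ for $p<1$ on such graphs, so $R(p)$ is only a lower bound on the protocol's rate and the claimed equality fails. The paper makes the same identification (``$d_n^{\downarrow}=0$ (i.e., $n\in\mathcal{I}_\kappa$)''), so you inherited this flaw rather than departing from the paper's route. Under the extra hypothesis $d_n^{\downarrow}\ge 1$ for all $n\in\mathcal{I}_2\cup\cdots\cup\mathcal{I}_{\kappa-1}$, your proof is complete.
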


\begin{theorem}[Maximal Leakage and Rate Tradeoff]
\label{thm:maxl_rate}
The biased protocol with parameter $p \in [1/2, 1]$ achieves the same rate $R(p)$ as given by~\eqref{eq:download_exact}. According to the independent-set decomposition, the per-server maximal leakage satisfies
\begin{equation}
    \varepsilon_n^{\mathrm{max}} \;=\;
    \begin{cases}
        0, & n \in \indset{1}, \\[4pt]
        \displaystyle\log\!\left[\frac{p + p^{\Un}(1-2p)}{1-p}\right],
            & n \notin \indset{1}.
    \end{cases}
    \label{eq:max_leakage_closed}
\end{equation}
\end{theorem}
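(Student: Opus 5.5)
The rate statement is immediate. The biased protocol of Section~\ref{sec:wpirgraphs} with parameter $p$ is exactly the one analysed in Theorem~\ref{thm:mi_rate}, and the expected download $\mathbb{E}[D(p)]$ depends only on the query distribution, not on the leakage metric. So I will invoke \eqref{eq:download_exact} directly, and all the work goes into computing $\varepsilon_n^{\mathrm{max}}$ from Definition~\ref{def:max_leakage}. Since that definition involves only $P_{Q_n\mid\theta}$, the responses $A_n$ play no role. This section comes before the protocol description, so my plan rests on an assumed query structure, which I will verify against Section~\ref{sec:wpirgraphs}:
\begin{itemize}[leftmargin=*]
\item $\querydown{n}$ is generated independently of $\theta$ and of $\queryup{n}$.
\item If $\theta$ is not one of the upstream edges $e_1,\dots,e_{\Un}$ of $n$, then $\queryup{n}$ has i.i.d.\ $\mathsf{Bern}(p)$ entries.
\item If $\theta=e_j$, then the $j$-th coordinate is complemented, so $P(\queryup{n}=q\mid\theta=e_j)=p^{\hw{q}+1-2q_j}(1-p)^{\Un-\hw{q}-1+2q_j}$.
\end{itemize}
I will also check that the two cases in \eqref{eq:P0}--\eqref{eq:marginal} arise as mixtures of these conditionals.

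For $n\in\indset{1}$ there is no upstream subvector and $\querydown{n}$ is independent of $\theta$. Hence $P_{Q_n\mid\theta}(q\mid m)$ does not depend on $m$, the sum in \eqref{eq:max_leakage} equals $1$, and $\varepsilon_n^{\mathrm{max}}=0$.

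For $n\notin\indset{1}$, the factorisation $P_{Q_n\mid\theta}=P_{\querydown{n}}\cdot P_{\queryup{n}\mid\theta}$ lets the maximum over $m$ act only on the upstream factor. The downstream factor then sums to $1$, leaving
\begin{equation*}
2^{\varepsilon_n^{\mathrm{max}}}=\sum_{q\in\F^{\Un}}\max_{m\in[K]}P(\queryup{n}=q\mid\theta=m).
\end{equation*}
For a fixed $q$, the candidate values are the unflipped probability $p^{\hw{q}}(1-p)^{\Un-\hw{q}}$ and, for each $j$, the flipped probability given above. Because $p\ge 1/2$, the largest is the one that raises the weight. If $q\neq\mathbf{1}$, flipping a coordinate with $q_j=0$ gives the maximum $p^{\hw{q}+1}(1-p)^{\Un-\hw{q}-1}$. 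If $q=\mathbf{1}$, no flip can raise the weight, so the maximum is the unflipped value $p^{\Un}$.

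Summing these maxima gives
\begin{equation*}
\frac{p}{1-p}\sum_{q\neq\mathbf{1}}p^{\hw{q}}(1-p)^{\Un-\hw{q}}+p^{\Un}=\frac{p}{1-p}\bigl(1-p^{\Un}\bigr)+p^{\Un}=\frac{p+p^{\Un}(1-2p)}{1-p},
\end{equation*}
and taking $\log$ yields \eqref{eq:max_leakage_closed}. The same computation covers $n\in\indset{\kappa}$ with $\Un=\degn$.

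The main obstacle is the first step: confirming that the protocol of Section~\ref{sec:wpirgraphs} really produces this product structure. Specifically, I need that $\querydown{n}$ carries no dependence on $\theta$ even when $\theta$ is a downstream edge of $n$. If the downstream part is instead correlated with $\theta$, I would redo the maximisation over the joint vector $\query{n}$. I would then still expect the maximiser to be the weight-increasing hypothesis, with the same telescoping sum.
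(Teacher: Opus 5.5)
Your proposal is correct and is essentially the paper's proof: the rate is inherited from \eqref{eq:download_exact}, servers in $\indset{1}$ leak nothing, and the factorisation $P_{Q_n\mid\theta}=P_{\querydown{n}}\,P_{\queryup{n}\mid\theta}$ removes the downstream part. The independence you flag as the main obstacle does hold, since $\querydown{n}$ comes from the fresh coin $C_n$ and the flip for a downstream edge of $n$ is applied at that edge's later endpoint; the per-$q$ maximisation and the binomial identity then give $p^{\Un}+\frac{p}{1-p}(1-p^{\Un})$, as in the paper. The only difference is that you take the upstream bits to equal $1$ with probability $p$, whereas the protocol has $P(X_j=0)=p$, so the paper's maximiser flips a coordinate with $q_k=1$ and its exceptional vector is $\mathbf{0}$; your computation is the paper's transported by $q\mapsto q\xor\mathbf{1}$, which leaves the sum unchanged (both versions tacitly need some $m\notin\mathcal{U}_n$, i.e.\ $K>\Un$, for the exceptional vector to attain $p^{\Un}$, which fails at the centre of a star graph).
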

%-----------------------------------------------------------------------
%-----------------------------------------------------------------------
%-----------------------------------------------------------------------
%-----------------------------------------------------------------------
\section{A Weak PIR Scheme for General Graphs}
\label{sec:wpirgraphs}
We now present a weak PIR protocol for general graphs. Our protocol is an extension of the protocol from \cite{shanbhag2026private} which achieves perfect information-theoretic privacy. 

\subsection{A G-WPIR Scheme for arbitrary graph}
\label{subsec:scheme}
The strictly private protocol of~\cite{shanbhag2026private} generates queries sequentially over the independent sets, and our G-WPIR scheme here follows the same. We summarize Each server $n\in\indset{1}$ is assigned a binary query $\query{n}\in\{\mathbf{0},\mathbf{1}\}$ according to a $\mathrm{Bern}(p)$ distribution. For $n\in\indset{s}$, $s\geq2$, the downstream subvector $\querydown{n}$ is drawn from $\{\mathbf{0},\mathbf{1}\}$ via an independent $\mathrm{Bern}(p)$ coin, while each upstream bit is set according to a rule that involves the corresponding upstream bit and the desired file index. Choosing the value of $p$ allows for a smooth trade-off between the rate and the privacy, thus resulting in Theorems \ref{thm:mi_rate} and \ref{thm:maxl_rate}. Fixing $p=0.5$ recovers the protocol in \cite{shanbhag2026private}[Section IV], which is strictly private. An example of the execution of the protocol is available in \cite{shanbhag2026private}[Example 2] for the $p=0.5$ case. This example also illustrates the execution of our WPIR protocol here, except that in our case $p$ is any value in $[0.5,1]$.
% This achieves strict privacy with $\mathbb{E}[D]\leq N-\alpha(\graph)/2$ and $R\geq\max\!\left\{\tfrac{2}{2N-\alpha(\graph)},\tfrac{1}{N-1}\right\}$.

% The central idea of the present work is to replace the fair coin with a \emph{biased} one parametrized by $p\in[1/2,1]$: each downstream subvector is now drawn with $P(\querydown{n}=\mathbf{0})=p$, independently across servers, while the upstream rule~\eqref{eq:baseprotocol_upstream} is unchanged. Setting $p=1/2$ recovers the strictly private scheme exactly; for $p>1/2$, null queries become more probable, reducing the expected download and raising the rate at the cost of nonzero leakage at every server with at least one upstream edge. The bias parameter $p$ thus controls the leakage-rate tradeoff studied in the remainder of this section. 
We now present the complete design of the protocol. Let $\graph = (\vertset, \edgeset)$ be the storage graph on $N$ vertices and $K = |\edgeset|$ files, with sequential independent-set partition $\vertset = \indset{1} \sqcup \indset{2} \sqcup \cdots \sqcup \indset{\kappa}$ as defined in Subsection~\ref{subsec:preliminaryindepsetsandquerynotation}. Let $\theta \in \edgeset$ denote the desired file index, drawn uniformly from $\edgeset$, and fix a bias parameter $p \in [1/2,\, 1]$. The client generates a binary query vector $\query{n} \in \F^{\degn}$ for each server $n \in \vertset$ by the following sequential procedure. 

\smallskip
\noindent\textbf{Step 1 -- Queries for $\indset{1}$.}
Every server $n \in \indset{1}$ has no upstream edges ($\degup = 0$), so all incident edges are downstream ($\degdown = \degn$). For each $n\in\indset{1}$, the client draws an independent coin $C_n$ with $P(C_n = 0) = p$ and $P(C_n = 1) = 1-p$, and sets
\begin{equation}
    \query{n} =
    \begin{cases}
        \mathbf{1} \in \F^{\degn}, & \text{with probability } 1-p, \\[2pt]
        \mathbf{0} \in \F^{\degn}, & \text{with probability } p.
    \end{cases}
    \label{eq:step1}
\end{equation}

\smallskip
\noindent\textbf{Step $s$ -- Queries for $\indset{s}$, $s \in \{2,\ldots,\kappa\}$.}
For each $n \in \indset{s}$, let $\Un \triangleq \degup = \degn - \degdown$ denote the number of upstream edges, and let $\mathcal{U}_n = \{e_1, e_2, \ldots, e_{\Un}\} \subseteq \edgeset$ be the collection of upstream edges (equivalently, the file indices) at server $n$. For each $j \in [\Un]$, the file $e_j$ is shared with some upstream server $m_j \in \indset{t}$, $t < s$. Let $X_j \triangleq C_{m_j}$ be the coin drawn for $m_j$ in Step $t$; since $e_j$ is a downstream edge of $m_j$, the value $X_j$ is precisely the query
bit that $m_j$ assigned to the shared file $W_{e_j}$. The variables $X_1, \ldots, X_{\Un}$ are mutually independent with $P(X_j = 0) = p$ and $P(X_j = 1) = 1-p$ for all $j$. The query $\query{n} = (\querydown{n},\, \queryup{n})$ is constructed as follows.
\begin{enumerate}
    \item \noindent\emph{Downstream subvector $\querydown{n}$:}
            If $\degdown > 0$, the client draws a fresh coin $C_n$ with $P(C_n = 0) = p$ and $P(C_n = 1) = 1-p$, independent of all prior randomness and of $\theta$, and sets
            \begin{equation}
                \querydown{n} =
                \begin{cases}
                    \mathbf{1} \in \F^{\degdown}, & \text{with probability } 1-p, \\[2pt]
                    \mathbf{0} \in \F^{\degdown}, & \text{with probability } p.
                \end{cases}
                \label{eq:downstream}
            \end{equation}
            If $\degdown = 0$ (i.e., $n \in \indset{\kappa}$), the downstream subvector is absent.
    \item \noindent\emph{Upstream subvector $\queryup{n}$:}
            The $j$-th component of $\queryup{n}$, corresponding to the upstream file $e_j \in \mathcal{U}_n$, is set to
            \begin{equation}
                f_j^n = X_j \;\xor\; \mathbf{1}[\theta=e_j], \quad j \in [\Un].
                \label{eq:upstream}
            \end{equation}
            That is, $f_j^n = X_j$ when $e_j \neq \theta$ (the indicator is zero, bits pass unmodified), and $f_j^n = X_j \xor 1$ when $e_j = \theta$ (the indicator is one, the bit is flipped).
\end{enumerate}

\smallskip
\noindent\textbf{Server responses.}
Server $n$ returns $A_n = \bigoplus_{j=1}^{\degn} f_j^n W_n^j$ if $\query{n} \neq \mathbf{0}$, and sends no response if $\query{n} = \mathbf{0}$. The client recovers the desired file as $\Wtheta = \bigoplus_{n \in \vertset} A_n$.

\begin{remark}
\label{rem:boundary}
% \pk{can be removed in shorter version}
At $p = 1/2$ all coins are fair, every query bit is $\mathrm{Bern}(1/2)$ independently of $\theta$, and the protocol coincides with the strictly private scheme of~\cite{shanbhag2026private}. The central idea of the present work is to replace the fair coin with a \emph{biased} one parametrized by $p\in[1/2,1]$. Setting $p>1/2$ makes null queries become more probable, reducing the expected download and raising the rate at the cost of nonzero leakage at every server with at least one upstream edge. The bias parameter $p$ thus controls the leakage-rate tradeoff, as indicated by Theorems \ref{thm:mi_rate} and \ref{thm:maxl_rate}. At $p = 1$, every coin is $0$ almost surely, so $X_j = 0$ for all $j$ and rule~\eqref{eq:upstream} reduces to $f_j^n = \mathbf{1}[\theta = e_j]$, thus completely revealing $\theta$ to the downstream server storing $\Wtheta$. 
% For $p \in (1/2,1)$, the protocol is weakly private with leakage strictly between these extremes, as characterized below.
\end{remark}

\begin{proposition}[Correctness]
\label{prop:correctness}
The protocol satisfies $\bigoplus_{n \in \vertset} A_n = \Wtheta$.
\end{proposition}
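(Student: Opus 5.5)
We prove the statement by expanding each response as a sum over files, exchanging the order of summation, and showing that every file except $\Wtheta$ appears an even number of times. Since a null query returns nothing and the empty XOR is $0$, we may write $A_n = \bigoplus_{j=1}^{\degn} f_j^n W_n^j$ for every server, including those with $\query{n}=\mathbf{0}$. Each file $W_e$ with $e=\{m,n\}\in\edgeset$ is stored on exactly two servers, so
\begin{equation*}
\bigoplus_{n\in\vertset} A_n = \bigoplus_{e=\{m,n\}\in\edgeset} \bigl(f_e^{m}\xor f_e^{n}\bigr) W_e ,
\end{equation*}
where $f_e^{m}$ denotes the query bit that server $m$ assigns to file $e$. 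The claim therefore reduces to the identity $f_e^m \xor f_e^n = \mathbf{1}[\theta=e]$ for every edge $e$.

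The key structural fact is that the two endpoints of any edge lie in different parts of the partition. Indeed, each $\indset{s}$ is an independent set, so for $e=\{m,n\}$ we have $m\in\indset{t}$ and $n\in\indset{s}$ with $t\neq s$. Without loss of generality take $t<s$. Then $e$ is a downstream edge of $m$ and an upstream edge of $n$; this follows directly from the definitions of $\degdown$ and $\degup$ in Subsection~\ref{subsec:preliminaryindepsetsandquerynotation}.

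On the downstream side, $m$ assigns to $e$ the bit $C_m$, by \eqref{eq:step1} if $t=1$ or by \eqref{eq:downstream} if $t\ge 2$. Here we use that $\querydown{m}$ is the all-zero or all-one vector according as $C_m=0$ or $C_m=1$. On the upstream side, $e=e_j$ for some $j\in[\Un]$ with $m_j=m$, so by \eqref{eq:upstream} we have $f_e^n = X_j\xor\mathbf{1}[\theta=e] = C_m\xor \mathbf{1}[\theta=e]$. Adding the two bits gives $f_e^m\xor f_e^n=\mathbf{1}[\theta=e]$. Substituting this into the display above, the sum collapses to $W_\theta$.

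The only step requiring care is the identification $X_j=C_{m_j}$, namely that the downstream bit assigned by $m_j$ to the shared file is exactly the coin reused in the upstream rule. This holds by construction of the protocol, because the coin $C_{m_j}$ was drawn at step $t<s$ and so is already available when the query for $n$ is formed at step $s$. A second point is that the argument does not depend on $p$, since correctness holds for every realization of the coins. Hence the protocol is correct for all $p\in[1/2,1]$.
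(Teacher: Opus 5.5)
Your proof is correct and follows the same argument as the paper: the coefficient of each file in $\bigoplus_n A_n$ is the XOR of the query bits assigned to it by its two endpoints, and these bits agree (so the file cancels) except for $W_\theta$, where the downstream endpoint flips the bit. You make explicit several steps the paper leaves implicit: the null-query/empty-XOR convention, that the two endpoints of an edge lie in different independent sets, and the identification $X_j = C_{m_j}$. The route is otherwise identical.
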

 
\begin{proof}
We repeat the arguments in \cite{shanbhag2026private} for completeness. Consider a file $W_{m,n}$, which is the $i^{\text{th}}$ file at upstream server $m$ and the $j^{\text{th}}$ file at the downstream server $n$. Suppose $W_{m,n} \neq \Wtheta$, then by the query construction $f_i^m=f_j^n$. However, if $W_{m,n} =\Wtheta$, then $f_i^m=f_j^n\oplus 1$. Thus, the sum of all responses $\bigoplus_n A_n=W_\theta$. This completes the proof. 

% For any file $W_{i,j} \neq \Wtheta$, the indicator in~\eqref{eq:upstream} is zero at both endpoints, so both assign the same bit value to the file, and its contribution to the XOR sum is zero. For $\Wtheta = W_{i,j}$, exactly one endpoint (the one in the later independent set) applies the flip, so the contribution is $1$. Hence $\bigoplus_n A_n = \Wtheta$.
\end{proof}
%---------------------------------------------------------------------- 
\section{Completing The Proofs of Theorems 1 and 2}
This section establishes the three components needed to prove Theorems~\ref{thm:mi_rate} and~\ref{thm:maxl_rate}.
Section~\ref{subsec:rate_biased} derives the expected download $\mathbb{E}[D(p)]$~\eqref{eq:download_exact}, which gives the achievable rate $R(p)$ common to both theorems. Section~\ref{subsec:mi_leakage} computes the per-server MI leakage, establishing~\eqref{eq:mi_leakage} of Theorem~\ref{thm:mi_rate}. 
Section~\ref{subsec:max_leakage} computes the per-server maximal leakage, establishing~\eqref{eq:max_leakage_closed} of Theorem~\ref{thm:maxl_rate}.
%%%
\subsection{Rate}
\label{subsec:rate_biased} 
With $L = 1$, the rate is $R = 1/\mathbb{E}[D]$ where $\mathbb{E}[D] = \sum_{n \in \vertset} P(\query{n} \neq \mathbf{0})$, as the download from each server is $1$ bit.

\smallskip
\noindent\textbf{Servers in $\indset{1}$.} From~\eqref{eq:step1}, $P(\query{n} = \mathbf{0}) = p$, so $P(\query{n} \neq \mathbf{0}) = 1-p$.

\smallskip
\noindent\textbf{Servers in $\indset{s}$, $s \geq 2$.} Since $\querydown{n} \perp \queryup{n}$, the null-query event $\{\query{n} = \mathbf{0}\}$ requires both subvectors to be zero. We first compute $P(\queryup{n} = \mathbf{0})$ by conditioning on $\theta$:
\begin{equation}
    P\!\left(\queryup{n} = \mathbf{0}\right)
    = \frac{1}{K}\sum_{t=1}^{K}
    P\!\left(\queryup{n} = \mathbf{0} \mid \theta = t\right).
    \label{eq:null_up_expand}
\end{equation}
Recall that each upstream vertex of vertex $n$ has precisely one file shared with $n$. Hence, for the $K - \Un$ indices $t \notin \mathcal{U}_n$, all bits in $\queryup{n}$ are distributed as $\mathrm{Bern}(1-p)$. Further, note that  all upstream vertices at $n$ use independent coin flips to generate their respective downstream query subvectors. Thus, we have $P(\queryup{n} = \mathbf{0} \mid \theta = t) = p^{\Un}$. For each of the $\Un$ indices $t = e_k \in \mathcal{U}_n$, bit $k$ follows $\mathrm{Bern}(p)$ while the other $\Un - 1$ bits follow $\mathrm{Bern}(1-p)$, so $P(\queryup{n} = \mathbf{0} \mid \theta = e_k) = (1-p) \cdot p^{\Un-1}$. Substituting into~\eqref{eq:null_up_expand}:
\begin{align}
    P\!\left(\queryup{n} = \mathbf{0}\right)
    &= \frac{1}{K}\!\left[(K - \Un)\,p^{\Un}
       + \Un\,(1-p)\,p^{\Un-1}\right]
    \notag\\
    &= \frac{p^{\Un-1}}{K}\!\left[Kp + \Un(1-2p)\right].
    \label{eq:null_up}
\end{align}
For $n \in \indset{s}$ with $\degdown > 0$ ($s \in [2:\kappa-1]$), multiplying by $P(\querydown{n} = \mathbf{0}) = p$, we have
\begin{equation}
    P\!\left(\query{n} = \mathbf{0}\right)
    = \frac{p^{\Un+1}}{K}\!\left[K - \frac{\Un(2p-1)}{p}\right].
    \label{eq:null_mid}
\end{equation}
For $n \in \indset{\kappa}$ ($\degdown = 0$), the full query equals the upstream subvector:
\begin{equation}
    P\!\left(\query{n} = \mathbf{0}\right)
    = \frac{p^{\Un}}{K}\!\left[K - \frac{\Un(2p-1)}{p}\right].
    \label{eq:null_last}
\end{equation}

Since $(2p-1)/p \in [0,\,1]$ for $p\in[1/2,1]$ and $\Un \leq K$, the term $\left(K - \Un(2p-1)/p\right)$ is non-negative. Hence the null-query probabilities in~\eqref{eq:null_mid} and~\eqref{eq:null_last} are non-negative, and from~\eqref{eq:download_exact} we have the following corollary.

\begin{corollary}[Rate lower bound]
\label{cor:rate_lb}
Therefore,
\begin{equation}
    R(p) \;\geq\; \max\!\left\{
        \frac{1}{N - |\indset{1}|\cdot p},\;
        \frac{1}{N-1}
    \right\}.
    \label{eq:rate_lb}
\end{equation}
Under the primary (largest-independent-set) construction, $|\indset{1}|=\alpha(\graph)$ and~\eqref{eq:rate_lb} reads $R(p)\ge\max\{1/(N-\alpha(\graph)p),\,1/(N-1)\}$; setting $p=1/2$ recovers $R \geq 2/(2N - \alpha(\graph))$ from~\cite{shanbhag2026private}. Since $|\indset{1}|\leq\alpha(\graph)$ for any merely maximal choice of $\indset{1}$, and~\eqref{eq:rate_lb} is increasing in $|\indset{1}|$, the largest-independent-set construction gives the best rate guarantee attainable from this bound; a greedy maximal-only choice can only weaken it, consistent with Footnote~1. For the complete graph $K_N$, every independent set is a singleton, so $|\indset{1}|=\alpha(K_N)=1$ regardless of which vertex is chosen, and $\mathbb{E}[D(p)] = N - 1$ up to correction terms, giving $R(p) \approx 1/(N-1)$.
\end{corollary}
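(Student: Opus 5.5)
The plan is to bound $\mathbb{E}[D(p)]=\sum_{n\in\vertset}P(\query{n}\neq\mathbf{0})$ from above in two different ways, one for each term inside the maximum, and then invert since $R(p)=1/\mathbb{E}[D(p)]$.

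\emph{First bound.} Each server in $\indset{1}$ contributes exactly $1-p$, by~\eqref{eq:step1}. For every other server I only use $P(\query{n}\neq\mathbf{0})\le 1$. This is justified because, as noted just before the corollary, the null-query probabilities in~\eqref{eq:null_mid} and~\eqref{eq:null_last} are non-negative. Summing gives
\[
\mathbb{E}[D(p)]\le |\indset{1}|(1-p)+(N-|\indset{1}|)=N-|\indset{1}|\,p,
\]
which yields $R(p)\ge 1/(N-|\indset{1}|p)$.

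\emph{Second bound.} I want $\mathbb{E}[D(p)]\le N-1$, i.e., the null-query probabilities over all servers sum to at least $1$.
\begin{itemize}
\item If $|\indset{1}|\ge 2$, this follows immediately from the first bound, since $|\indset{1}|p\ge 1$ for $p\ge 1/2$.
\item The real case is $|\indset{1}|=1$, which (for a largest independent set) means $\graph=K_N$.
\end{itemize}
Here I would not compute the sum directly. Instead I show that every null-query probability is nondecreasing in $p$ on $[1/2,1]$, and then compare with $p=1/2$.

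\emph{Monotonicity.} For $\indset{1}$ the null probability is $p$, which is increasing. For $n\notin\indset{1}$, I differentiate the expression in~\eqref{eq:null_up}:
\[
\frac{d}{dp}\Big[(K-\Un)p^{\Un}+\Un(1-p)p^{\Un-1}\Big]=\Un p^{\Un-2}\big[(K-\Un-1)p+(\Un-1)(1-p)\big].
\]
This is $\ge 0$ whenever $K\ge \Un+1$ (recall $\Un\ge 1$). Multiplying by the factor $p$ coming from $\querydown{n}$ for middle servers preserves monotonicity, since both factors are nonnegative and nondecreasing.

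\emph{Edge case $K=\Un$.} This means every edge of the graph is an upstream edge of $n$, so $\graph$ is a star centred at $n$. Then $|\indset{1}|=N-1\ge 2$ (for $N\ge 3$), so this case is already covered by the first paragraph.

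\emph{Conclusion.} Hence $\mathbb{E}[D(p)]\le\mathbb{E}[D(1/2)]$. At $p=1/2$ all $(2p-1)$ terms vanish, and for $K_N$ the null probabilities are $1/2$ for $\indset{1}$, $(1/2)^{s}$ for $\indset{s}$ with $2\le s\le N-1$, and $(1/2)^{N-1}$ for $\indset{N}$. These telescope to exactly $1$, matching the $1/(N-1)$ rate of~\cite{shanbhag2026private}. So $\mathbb{E}[D(p)]\le N-1$.

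\emph{Remaining claims.} The statements about $\alpha(\graph)$ are then immediate. The bound is increasing in $|\indset{1}|$, any maximal $\indset{1}$ has $|\indset{1}|\le\alpha(\graph)$, and $p=1/2$ gives $2/(2N-\alpha(\graph))$.

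The step I expect to be the main obstacle is the second bound. It needs monotonicity in $p$ rather than a crude estimate, and it needs the degenerate case $K=\Un$, where the derivative can be negative, to be routed through the first bound.
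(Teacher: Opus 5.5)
Your first bound is the paper's own argument. The paper's whole justification for the corollary is that the null-query probabilities in \eqref{eq:null_mid}--\eqref{eq:null_last} are non-negative. So each server outside $\indset{1}$ contributes at most $1$, and $\mathbb{E}[D(p)]\le N-|\indset{1}|p$.

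You genuinely differ on the $1/(N-1)$ term, which the paper never derives. Non-negativity alone gives only $N-p>N-1$ when $|\indset{1}|=1$, and for $\KN$ the paper merely says $\mathbb{E}[D(p)]=N-1$ ``up to correction terms.'' Your route turns that heuristic into a proof:
\begin{enumerate}
\item dispatch $|\indset{1}|\ge 2$ via $|\indset{1}|p\ge 1$;
\item show every null-query probability is nondecreasing in $p$ through $\Un p^{\Un-2}[(K-\Un-1)p+(\Un-1)(1-p)]\ge 0$;
\item evaluate the telescoping sum exactly at $p=1/2$.
\end{enumerate}
As a by-product, it justifies the monotonicity of $R_{\KN}(p)$, which the paper's remarks on $\KN$ assert without proof. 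The cost is the derivative computation and the $K=\Un$ bookkeeping. The paper's one-line argument avoids these only because it does not prove the second term.

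There are two loose ends. First, for $N=2$ your edge-case routing fails, since $|\indset{1}|=1$ and $K=\Un=1$. There, however, $\mathbb{E}[D(p)]=(1-p)+p=1=N-1$ directly.

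Second, the statement presupposes \eqref{eq:rate_lb} also for a merely maximal $\indset{1}$. In that setting $|\indset{1}|=1$ need not mean $\graph=\KN$, because any dominating vertex can form $\indset{1}$ on its own. Your monotonicity step still applies there for $N\ge 3$, because $K=\Un$ forces $|\indset{1}|=N-1$. To finish at $p=1/2$, suppose $\indset{1},\dots,\indset{t}$ are singletons $\{v_1\},\dots,\{v_t\}$, each adjacent to every later vertex, and $|\indset{t+1}|\ge 2$. Then every vertex of $\indset{t+1}$ has $\Un=t$. Hence the null-query probabilities sum to at least $\tfrac12+\sum_{s=2}^{t}2^{-s}+2\cdot 2^{-(t+1)}=1$.
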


This establishes~\eqref{eq:download_exact} and the rate $R(p)=1/\mathbb{E}[D(p)]$ claimed in both Theorems~\ref{thm:mi_rate} and~\ref{thm:maxl_rate}.
%--------------------------------------------------------------------
\subsection{Mutual Information (MI) Leakage}
\label{subsec:mi_leakage}
We now derive the per-server MI leakage, deriving~\eqref{eq:mi_leakage} of Theorem~\ref{thm:mi_rate}.
Since $A_n$ is a deterministic function of $\query{n}$ and the stored files, and the files are independent of $\theta$, we have $I(\theta;\,\query{n}, A_n) = I(\theta;\,\query{n})$. We now find $I(\theta;\,\query{n})$ at each server $n$.

\smallskip
\noindent\textbf{Servers in $\indset{1}$.}
As $\query{n} \perp \theta$ for every $n \in \indset{1}$ by query design, we have $I(\theta;\,\query{n}) = 0$.

\smallskip
\noindent\textbf{Servers in $\indset{s}$, $s \geq 2$.}
Since $\querydown{n}$ is generated by the independent coin $C_n$, we have $\querydown{n} \perp \theta$ and $\querydown{n} \perp \queryup{n}$.
By the chain rule of mutual information,
\begin{equation}
    I(\theta;\,\query{n})
    = I(\theta;\,\querydown{n}) + I(\theta;\,\queryup{n} \mid \querydown{n})
    = I(\theta;\,\queryup{n}),
    \label{eq:chain_rule}
\end{equation}
so the entire leakage is carried by the upstream subvector. We compute $I(\theta;\,\queryup{n}) = H(\queryup{n}) - H(\queryup{n} \mid \theta)$.

\smallskip
\noindent\emph{Conditional entropy $H(\queryup{n} \mid \theta)$.}
From~\eqref{eq:upstream}, bit $j$ is $f_j^n = X_j \xor \mathbf{1}[e_j = \theta]$. The coins $X_1,\ldots,X_{\Un}$ are mutually independent with $P(X_j = 0) = p$.

\emph{Case A ($\theta \notin \mathcal{U}_n$):} The indicator is zero for all $j$, so $f_j^n = X_j \sim \mathrm{Bern}(p)$ for all $j$. The bits are i.i.d., giving $H(\queryup{n} \mid \theta) = \Un \cdot \Hb(p)$.

\emph{Case B ($\theta = e_k$ for some $k \in [\Un]$):} Only bit $k$ is flipped: $f_k^n = X_k \xor 1 \sim \mathrm{Bern}(1-p)$, while $f_j^n = X_j \sim \mathrm{Bern}(p)$ for all $j \neq k$. Since $\Hb(p) = \Hb(1-p)$, and all $\Un$ bits remain jointly independent, thus $H(\queryup{n} \mid \theta = e_k) = \Un \cdot \Hb(p)$.

%\emph{Case A ($\theta \notin \mathcal{U}_n$):} The indicator is zero for all $j$, so $f_j^n = X_j \sim \mathrm{Bern}(1-p)$ for all $j$. The bits are i.i.d., giving $H(\queryup{n} \mid \theta) = \Un \cdot \Hb(p)$.

%\emph{Case B ($\theta = e_k$ for some $k \in [\Un]$):} Only bit $k$ is flipped: $f_k^n = X_k \xor 1 \sim \mathrm{Bern}(p)$, while $f_j^n = X_j \sim \mathrm{Bern}(1-p)$ for all $j \neq k$. Since $\Hb(p) = \Hb(1-p)$, and all $\Un$ bits remain jointly independent, thus $H(\queryup{n} \mid \theta = e_k) = \Un \cdot \Hb(p)$.

In both cases $H(\queryup{n} \mid \theta) = \Un \cdot \Hb(p)$, and averaging over $\theta$ confirms
\begin{equation}
    H\!\left(\queryup{n} \mid \theta\right) = \Un \cdot \Hb(p).
    \label{eq:cond_entropy}
\end{equation}

\smallskip
\noindent\emph{Marginal distribution of $\queryup{n}$.}
We compute $P(\queryup{n} = q)$ for each $q \in \F^{\Un}$ by the law of total probability. Define the \emph{no-hit} conditional distribution
\begin{equation}
    P_0(q) \;\triangleq\; P\!\left(\queryup{n} = q \mid \theta \notin
    \mathcal{U}_n\right) \;=\; p^{\Un - \hw{q}}\,(1-p)^{\hw{q}},
    \label{eq:P0}
\end{equation}
and for each $k \in [\Un]$ the \emph{hit-at-$k$} conditional distribution 
\begin{equation}
    P_k(q) \;\triangleq\; P\!\left(\queryup{n} = q \mid \theta = e_k\right)
    \;=\; p^{\Un - \hw{q \xor \ebk}}\,(1-p)^{\hw{q \xor \ebk}},
    \label{eq:Pk}
\end{equation}
where $\ebk \in \F^{\Un}$ is the standard basis vector with $1$ at position $k$, reflecting the flip of bit $k$ in~\eqref{eq:upstream}. Noting that
\begin{equation}
    \hw{q \xor \ebk} =
    \begin{cases}
        \hw{q} - 1, & q_k = 1,\\
        \hw{q} + 1, & q_k = 0,
    \end{cases}
    \label{eq:hw_flip}
\end{equation}
and using $P(\theta \notin \mathcal{U}_n) = (K - \Un)/K$ and $P(\theta = e_k) = 1/K$, the marginal distribution of $\queryup{n}$ is
\begin{equation}
    P\!\left(\queryup{n} = q\right)
    \;=\; \frac{K-\Un}{K}\,P_0(q)
      \;+\; \frac{1}{K}\sum_{k=1}^{\Un} P_k(q).
    \label{eq:marginal}
\end{equation}
Combining~\eqref{eq:chain_rule} and~\eqref{eq:cond_entropy}: $I(\theta;\,\query{n}) = H(\queryup{n}) - \Un \cdot \Hb(p)$, with $H(\queryup{n})$ computed from the marginal~\eqref{eq:marginal}
via~\eqref{eq:H_up}.

\begin{remark}
At $p = 1/2$, both $P_0(q) = 2^{-\Un}$ and $P_k(q) = 2^{-\Un}$ for all $q$ and $k$, so the marginal~\eqref{eq:marginal} is the uniform distribution on $\F^{\Un}$, giving $H(\queryup{n}) = \Un$ bits and $\varepsilon_n^{\mathrm{MI}} = \Un - \Un = 0$, confirming strict privacy. Note that the MI leakage is zero at servers in
$\indset{1}$.
\end{remark}
Combined with the rate result of Section~\ref{subsec:rate_biased}, this completes the proof of Theorem~\ref{thm:mi_rate}.
%----------------------------------------------------------------------
\subsection{Maximal Leakage}
\label{subsec:max_leakage}
We next derive the per-server maximal leakage, establishing~\eqref{eq:max_leakage_closed} of Theorem~\ref{thm:maxl_rate}.
Recall that the maximal leakage at server $n$ is defined as $\varepsilon_n^{\mathrm{max}} = \log \sum_{q \in \mathcal{Q}_n} \max_{m \in [K]} P_{Q_n \mid \theta}(q \mid m)$~\cite{issa2019operational}. %\pk{the following line can be removed in short version} 
Operationally, for each possible observed query $q$, the server identifies the single file index $m$ that maximises the probability of having generated $q$; these maximal probabilities are then summed over all possible queries and the logarithm is taken. 

We now compute the maximal leakages for the biased protocol.
%\pk{One line argument needed for maximal leakage = 0 at $n\in\indset{1}$. I switched the statement to keep that part first.}
\noindent\textit{\underline{Servers IN $\indset{1}$:}}
Since $\query{n}\perp\theta$ for every $n\in\indset{1}$, we have $\max_{\theta}P_{Q_n|\theta}(q\mid\theta)=P_{Q_n}(q)$ for each $q$, so $\sum_{q}\max_{\theta}P_{Q_n|\theta}(q\mid\theta)=1$ and $\varepsilon_n^{\mathrm{max}}=\log 1=0$.

\noindent\textit{\underline{Servers in $\indset{s}$, $s\geq2$:}}
Let $q=(q^\downarrow,q^\uparrow)$ denote a query sample at server $n$, with the upstream and downstream sub-queries $q^\uparrow$ and $q^\downarrow$ respectively. denote the  $\querydown{n}$ is independent of $\theta$ and the full query factorises as $P_{Q_n \mid \theta}(q \mid m) = P_{\querydown{n}}(q^\downarrow) \cdot P_{\queryup{n} \mid \theta}(q^\uparrow \mid m)$, summing $P_{\querydown{n}}(q^\downarrow)$ over all $q^\downarrow \in \F^{\degdown}$ gives $1$.
The downstream subvector therefore vanishes from the maximal leakage sum, giving
\begin{equation}
    \varepsilon_n^{\mathrm{max}}
    = \log \sum_{q^{\uparrow} \in \F^{\Un}}
      \max_{m \in [K]}\, P\!\left(\queryup{n} = q^{\uparrow} \mid \theta = m\right).
    \label{eq:max_red}
\end{equation}
We identify the maximising $m$ for each $q^\uparrow$ by expressing $P_k(q^\uparrow)$ as a scalar multiple of $P_0(q^\uparrow)$. From~\eqref{eq:P0} and~\eqref{eq:Pk}, when $\theta = e_k \in \mathcal{U}_n$ only the $k$-th bit is flipped, so
\begin{align}
    \frac{P_k(q^\uparrow)}{P_0(q^\uparrow)}
    &\;=\;
    \frac{p^{\Un - \hw{q^\uparrow \xor \ebk}}\,(1-p)^{\hw{q^\uparrow \xor \ebk}}}
         {p^{\Un - \hw{q^\uparrow}}\,(1-p)^{\hw{q^\uparrow}}}\nonumber\\
    &\;=\;
    \begin{cases}
        \dfrac{p}{1-p}, & q^\uparrow_k = 1,\\[6pt]
        \dfrac{1-p}{p}, & q^\uparrow_k = 0,
    \end{cases}
    \label{eq:ratio}
\end{align}
using~\eqref{eq:hw_flip}. Since $p \in [1/2, 1)$ implies $p/(1-p) \geq 1$ while $(1-p)/p \leq 1$, the ratio in~\eqref{eq:ratio} exceeds $1$ if and only if $q^\uparrow_k = 1$. We now determine the maximising $m$ case by case.
 
\emph{Case I: $\hw{q^\uparrow} = 0$.} All bits of $q^\uparrow$ are $0$, so for every $k \in [\Un]$ the ratio $P_k(q^\uparrow)/P_0(q^\uparrow) = (1-p)/p \leq 1$. No upstream hypothesis $\theta = e_k$ can exceed $P_0(q^\uparrow)$, so $\theta \notin \mathcal{U}_n$ is the maximiser:
\begin{equation}
    \max_{m}\, P\!\left(\queryup{n} = q^{\uparrow} \mid \theta = m\right)
    \;=\; P_0(\mathbf{0}) \;=\; p^{\Un}.
    \label{eq:max_w0}
\end{equation}
 
\emph{Case II: $\hw{q^\uparrow} \geq 1$.} There exists at least one position $k$ with $q^\uparrow_k = 1$. For such $k$, the ratio $P_k(q^\uparrow)/P_0(q^\uparrow) = p/(1-p) \geq 1$, strictly exceeding $1$ for $p > 1/2$. The server can therefore select any such $k$ as the maximising hypothesis, giving
\begin{equation}
    \max_{m}\, P\!\left(\queryup{n} = q^{\uparrow} \mid \theta = m\right)
    \;=\; \frac{p}{1-p} \cdot P_0(q^{\uparrow}).
    \label{eq:max_w1}
\end{equation}
 
Summing over all $2^{\Un}$ possible query vectors and grouping by Hamming weight:
\begin{align}
    &\sum_{q^{\uparrow} \in \F^{\Un}}
    \max_{m}\, P\!\left(\queryup{n} = q^{\uparrow} \mid \theta = m\right)
    \notag\\
    &\quad= p^{\Un}
      + \frac{p}{1-p}
        \sum_{w=1}^{\Un} \tbinom{\Un}{w}\,p^{\Un-w}(1-p)^{w}.
    \label{eq:sum_expanded}
\end{align}
By the binomial theorem, $\sum_{w=0}^{\Un}\binom{\Un}{w}p^{\Un-w}(1-p)^w = (p + 1-p)^{\Un} = 1$, so $\sum_{w=1}^{\Un}\binom{\Un}{w}p^{\Un-w}(1-p)^w = 1 - p^{\Un}$.
Substituting into~\eqref{eq:sum_expanded}:
\begin{align}
    \sum_{q^{\uparrow}} \max_m\, P\!\left(\queryup{n} = q^{\uparrow} \mid \theta = m\right)
    &= p^{\Un} + \frac{p}{1-p}\!\left(1 - p^{\Un}\right)
    \notag\\
    &= \frac{p^{\Un}(1-p) + p - p^{\Un+1}}{1-p}
    \notag\\
    &= \frac{p + p^{\Un}(1-2p)}{1-p}.
    \label{eq:sum_max}
\end{align}
Taking logarithms gives~\eqref{eq:max_leakage_closed}.

\begin{remark}
\label{rem:max_properties}
For fixed $p\in(1/2,1)$, the leakage is strictly increasing in $\Un$: differentiating the numerator with respect to $\Un$ gives $p^{\Un}\ln(p)(1-2p)>0$ (since $\ln(p)<0$ and $1-2p<0$ for $p\in(1/2,1)$), so a server with more upstream edges always incurs higher maximal leakage. The maximal leakage depends on the graph only through the upstream degree $\Un$, and is also strictly increasing in $p$ for fixed $\Un\geq1$.
\end{remark}
Combined with the rate result of Section~\ref{subsec:rate_biased}, this completes the proof of Theorem~\ref{thm:maxl_rate}.
\section{Special Graphs}
\label{sec:special_graphs}
 
We now specialize the general scheme of Section~\ref{sec:wpirgraphs} and resulting theorems for two important graph families: complete graphs and complete bipartite graphs. For each family, the simplicity of the graph-structure allows the general leakage and rate expressions to be substantially simplified. The symmetry of complete graphs allows us to add a further randomization during query generation, which enables us to get uniform leakages at all servers. 
%%%
\subsection{Complete Graphs}
\label{subsec:complete_graphs}
 
Consider the complete graph $\KN$ on $N$ vertices, in which every pair of servers shares a file. The total number of files is $K = \binom{N}{2} = N(N-1)/2$, and the independence number is $\alpha(\KN) = 1$, since no two vertices are non-adjacent. The sequential independent-set partition of Subsection~\ref{subsec:preliminaryindepsetsandquerynotation} therefore consists of $\kappa = N$ singleton sets, one vertex per set: $\indset{s} = \{v_s\}$ for each $s \in [N]$, where the vertices are labeled in the order they are selected. The server $v_s \in \indset{s}$ has total degree $\degn = N-1$, of which $\degdown = N-s$ edges point downstream (to servers in $\indset{s+1}, \ldots, \indset{N}$) and $\degup = s - 1$ edges point upstream (to servers in $\indset{1}, \ldots, \indset{s-1}$). In particular, the server in $\indset{1}$ has no upstream edges ($\degup = 0$), and the server in $\indset{N}$ has no downstream edges ($\degdown = 0$).

\begin{theorem}[Leakage-Rate Tradeoff for $\KN$]
\label{thm:kn_tradeoff}
The biased protocol on $\KN$ with parameter $p\in[1/2,1]$ satisfies:

\noindent\textit{(i) MI leakage:} For server $v_s$,
\begin{equation}
    \varepsilon_{v_s}^{\mathrm{MI}}
    \;=\; H\!\left(Q_{v_s}^{\uparrow}\right) - (s-1)\,\Hb(p),
    \label{eq:mi_complete}
\end{equation}
where $H(Q_{v_s}^\uparrow)$ is computed from~\eqref{eq:P0}--\eqref{eq:marginal}
with $\Un=s-1$ and $K=N(N-1)/2$. The MI leakage is zero for $s=1$ and
for all servers at $p=1/2$, and increases with $p$ for each fixed
$s\geq2$; the per-server variation with $s$ is illustrated in
Fig.~\ref{fig:comlete_graph_MIL}.

\noindent\textit{(ii) Maximal leakage:}
\begin{equation}
    \varepsilon_{v_s}^{\mathrm{max}} \;=\;
    \begin{cases}
        0, & s=1,\\[4pt]
        \displaystyle\log\!\left[\frac{p+p^{s-1}(1-2p)}{1-p}\right],
            & s\in\{2,\ldots,N\}.
    \end{cases}
    \label{eq:max_complete}
\end{equation}
The leakage~\eqref{eq:max_complete} is strictly increasing in $s$:
servers encountered later accumulate more upstream edges and therefore
leak more, with maximum leakage at $\indset{N}$.

\noindent\textit{(iii) Rate:} The expected download is
\begin{align}
    \mathbb{E}\!\left[D_{\KN}(p)\right]
    &= (1-p) \notag\\
    &\quad+\sum_{s=2}^{N-1}\!\!\left[1-\frac{p^{s}}{K}\!\left(K
      -\frac{(s-1)(2p-1)}{p}\right)\right] \notag\\
    &\quad+\left[1-\frac{p^{N-1}}{K}\!\left(K
      -\frac{(N-1)(2p-1)}{p}\right)\right],
    \label{eq:download_complete}
\end{align}
with $K=N(N-1)/2$, giving rate $R_{\KN}(p)=1/\mathbb{E}[D_{\KN}(p)]$.
\end{theorem}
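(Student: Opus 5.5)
The plan is to obtain Theorem~\ref{thm:kn_tradeoff} by specializing Theorems~\ref{thm:mi_rate} and~\ref{thm:maxl_rate} to the partition of $\KN$ described just before the statement. That partition has $\kappa=N$ singleton sets $\indset{s}=\{v_s\}$. The server $v_s$ has upstream degree $\Un=s-1$ and downstream degree $N-s$, and $K=N(N-1)/2$. Since every step of the general protocol is unchanged, correctness follows from Proposition~\ref{prop:correctness}. The only extra work is substituting these parameters and checking the qualitative monotonicity claims.

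\emph{Part (iii).} In~\eqref{eq:download_exact}, set $|\indset{1}|=1$, so the first term is $(1-p)$. For $s=2,\dots,N-1$ the middle sum has a single vertex $v_s$ with $\Un=s-1$, so its exponent $\Un+1$ becomes $s$. For $\indset{\kappa}=\{v_N\}$ the last term uses $\Un=N-1$. This yields~\eqref{eq:download_complete} directly.

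\emph{Part (ii).} Substitute $\Un=s-1$ into~\eqref{eq:max_leakage_closed} to get~\eqref{eq:max_complete}. Strict monotonicity in $s$ then follows from Remark~\ref{rem:max_properties}: the numerator $p+p^{\Un}(1-2p)$ is strictly increasing in $\Un$ for $p\in(1/2,1)$. Going from $s=1$ (leakage $0$) to $s=2$, the value is $\log\frac{p+p(1-2p)}{1-p}=\log 2p$, which is positive. Hence the maximum is attained at $\indset{N}$.

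\emph{Part (i).} Substituting $\Un=s-1$ into~\eqref{eq:mi_leakage}, with the marginal~\eqref{eq:marginal}, gives~\eqref{eq:mi_complete}. Zero leakage for $s=1$ and at $p=1/2$ follows from Theorem~\ref{thm:mi_rate} and the remark after~\eqref{eq:marginal}.

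The main obstacle is the claim that the MI leakage is monotone in $p$, because the closed form involves the entropy of a mixture and differentiating it is messy. I would instead use a degradation argument.
\begin{itemize}
\item Fix $1/2\le p'\le p<1$ and choose $\delta\in[0,1/2]$ with $p'=p(1-\delta)+(1-p)\delta$.
\item Let $Z_1,\dots,Z_{s-1}$ be i.i.d.\ $\mathrm{Bern}(\delta)$, independent of everything else. Then each $X_j\xor Z_j$ has the law of a $p'$-coin.
\item Because~\eqref{eq:upstream} is affine in $X_j$, the vector $Q^{\uparrow}_{v_s}(p)\xor Z$ has the same joint law with $\theta$ as $Q^{\uparrow}_{v_s}(p')$.
\item This gives the Markov chain $\theta - Q^{\uparrow}(p) - Q^{\uparrow}(p')$, and the data-processing inequality yields $\varepsilon^{\mathrm{MI}}_{v_s}(p')\le\varepsilon^{\mathrm{MI}}_{v_s}(p)$.
\end{itemize}
This argument gives non-decreasing behaviour. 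If the word "increases" is meant strictly, I would add a check that the mixture entropy is strictly concave along the noise path. As a fallback, I would point to the numerical behaviour shown in Fig.~\ref{fig:comlete_graph_MIL}.
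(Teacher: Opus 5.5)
Your derivation of the three displayed formulas is exactly the paper's proof: substitute $\Un=s-1$, $|\indset{s}|=1$ and $K=N(N-1)/2$ into Theorems~\ref{thm:mi_rate} and~\ref{thm:maxl_rate} and into~\eqref{eq:download_exact}. You go further than the paper by also justifying the qualitative claims, which its proof does not address. Monotonicity in $s$ is left to Remark~\ref{rem:max_properties}, and monotonicity of the MI leakage in $p$ is only illustrated in Fig.~\ref{fig:comlete_graph_MIL}. Your degradation argument is correct. The coins seen by $v_s$ come from the distinct vertices $v_1,\dots,v_{s-1}$, so $X\xor Z$ is an i.i.d.\ vector of $p'$-coins independent of $\theta$. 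By~\eqref{eq:chain_rule} the leakage equals $\MI(\theta;Q^{\uparrow}_{v_s})$. The argument proves more than the theorem needs: nothing in it is specific to $\KN$, so it gives monotonicity in $p$ at every server of every graph. Since maximal leakage also obeys the data-processing inequality, it covers $\varepsilon_n^{\mathrm{max}}$ as well.

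For strictness you need neither a concavity check, which is vague as stated, nor the figure; the equality case of data processing suffices. Let $Y=Q^{\uparrow}_{v_s}(p)$ and $W=Y\xor Z$. The chain rule gives $\MI(\theta;Y)=\MI(\theta;W)+\MI(\theta;Y\mid W)$, so equality forces $P(\theta\mid Y=y)=P(\theta\mid W=w)$ whenever $P(Y=y,W=w)>0$. For $p>p'$ you have $\delta=(p-p')/(2p-1)\in(0,1/2]$, so $Z$ has full support. Hence $P(Y=y,W=w)>0$ for every $w$ and every $y$ in the support of $Y$. Fixing one $w$ shows the posterior of $\theta$ does not depend on $y$, i.e.\ $\theta\perp Y$. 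This fails for $p>1/2$, because
\[
P(Y=\mathbf{0}\mid\theta\notin\mathcal{U}_{v_s})=p^{s-1}\neq(1-p)p^{s-2}=P(Y=\mathbf{0}\mid\theta=e_k).
\]
That last step needs a file outside $\mathcal{U}_{v_s}$, i.e.\ $K>s-1$, which holds for all $s$ only when $N\geq 3$. For $N=2$ there is a single file and all leakages vanish identically. Moreover,~\eqref{eq:max_complete} itself is then wrong, since its derivation relies on Case~I of Section~\ref{subsec:max_leakage}. You should therefore assume $N\geq 3$ explicitly.
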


\begin{proof}
Parts~(i) and~(ii) follow by substituting $\Un=s-1$ into
Theorems~\ref{thm:mi_rate} and~\ref{thm:maxl_rate} respectively.
Part~(iii) follows by substituting $|\indset{s}|=1$ and $\alpha(\KN)=1$
into~\eqref{eq:download_exact}.
\end{proof}

\begin{remark}
\label{rem:complete_boundary}
The expected download~\eqref{eq:download_complete} satisfies
$\mathbb{E}[D_{\KN}(1/2)]=N-1$ and $\mathbb{E}[D_{\KN}(1)]=1$,
giving $R_{\KN}(1/2)=1/(N-1)$ and $R_{\KN}(1)=1$.
At $p=1/2$: $(2p-1)=0$, so null-probabilities reduce to $(1/2)^s$
and $(1/2)^{N-1}$; using $\sum_{s=2}^{N-1}(1/2)^s = 1/2-(1/2)^{N-1}$
gives $\mathbb{E}[D_{\KN}(1/2)]=N-1$, recovering~\cite{shanbhag2026private}.
At $p=1$: the non-null probability for server $v_s$ is $(s-1)/K$, so
$\mathbb{E}[D_{\KN}(1)]=(1/K)\sum_{s=1}^{N-1}s=1$.
At $p=1/2$ the protocol is strictly private and all $N-1$ non-first
servers respond. At $p=1$ only the server whose upstream edge coincides
with $\theta$ responds, giving $D=1$ and $R=1$ — the maximum rate at
the cost of completely revealing $\theta$. The rate $R_{\KN}(p)$ is
strictly increasing in $p$, ranging from $1/(N-1)$ to $1$.
\end{remark}

\begin{remark}
\label{rem:boundarycompletegraphs}
The boundary values have a natural interpretation. At $p = 1/2$, the protocol is strictly private and the download equals $N-1$: essentially, all $N-1$ servers other than the one in $\indset{1}$ (which is queried only half the time) respond. At $p = 1$, the biased coins are all identically zero, so only the server whose upstream edge coincides with the desired file $\theta$ receives a non-null query on average; precisely one server responds, giving $D = 1$ and $R = 1$.  This represents the maximum rate achievable by the protocol family, at the cost of revealing $\theta$ entirely to a server which contains it. The rate $R_{\KN}(p)$ is strictly increasing in $p$, ranging from $1/(N-1)$ at $p=1/2$ to $1$ at $p=1$.
\end{remark}
%-------------------------------------------------------------------------
\subsection{Leakage-Averaged Protocol for $\KN$ via Cyclic Shift}
\label{subsec:cyclic_shift}
 
In the biased protocol of Section~\ref{subsec:scheme}, the leakage at server $v_s$ strictly increases with its partition index $s$: the server in $\indset{1}$ leaks nothing, while the server in $\indset{N}$ incurs the maximum leakage. This asymmetry is purely an artefact of the ordering imposed by the sequential partition; the underlying graph $\KN$ treats all vertices identically. We now describe a modified protocol that equalises the per-server leakage across all servers by prepending a uniform random cyclic relabelling of the vertices before the query process, while leaving the expected download — and hence the rate — unchanged.
 
\smallskip
\noindent\textbf{Protocol 2 (Cyclic-Shift Biased Protocol).}
Before executing the biased protocol of Section~\ref{subsec:scheme}, the client draws a shift offset $R$ uniformly and independently at random from $[N]$. It relabels the vertices of $\KN$ by the cyclic permutation $\sigma_R : v_i \mapsto v_{(i + R - 2 \bmod N) + 1}$, and runs the biased protocol on the relabelled graph. The offset $R$ is private client-side randomness, not revealed to any server. Server $v_n$ therefore observes only its own query $Q_n$, which is distributed according to the mixture
\begin{equation}
    P\!\left(Q_n = q \mid \theta\right)
    \;=\; \frac{1}{N} \sum_{s=1}^{N}
          P_s\!\left(Q = q \mid \theta\right),
    \label{eq:mix_dist}
\end{equation}
where $P_s(Q = q \mid \theta)$ is the query distribution of a server at position $s$ under the original biased protocol.
 
\begin{theorem}[Protocol~2 Leakage-Rate Tradeoff] \label{thm:shift_leakage}
Under Protocol~2 with parameter $p\in[1/2,1)$, every server $v_n$ has the same marginal query distribution and hence identical leakage by any measure. The expected download satisfies $\mathbb{E}[D_{\mathrm{shift}}]=\mathbb{E}[D_{\KN}(p)]$, where $\mathbb{E}[D_{\KN}(p)]$ is given by~\eqref{eq:download_complete}, while the common per-server leakage values $\varepsilon_{\mathrm{shift}}^{\mathrm{max}}$ (maximal leakage) and $\varepsilon_{\mathrm{shift}}^{\mathrm{MI}}$ (MI leakage) satisfy:

\noindent\textit{(i) Maximal leakage upper bound:}
\begin{align}
    \varepsilon_{\mathrm{shift}}^{\mathrm{max}}
    &\leq\log\!\left(\frac{1}{N}\sum_{s=1}^{N}
    \frac{p+p^{s-1}(1-2p)}{1-p}\right)
    \label{eq:shift_max_sum}\\
    &=\log\!\left[\frac{p}{1-p}+
    \frac{(1-2p)(1-p^N)}{N(1-p)^2}\right].
    \label{eq:shift_max}
\end{align}

\noindent\textit{(ii) MI leakage upper bound:}
\begin{equation}
    \varepsilon_{\mathrm{shift}}^{\mathrm{MI}}
    \;\leq\;
    \frac{1}{N}\sum_{s=2}^{N}
    \varepsilon_{v_s}^{\mathrm{MI}}
    \;=\;
    \frac{1}{N}\sum_{s=2}^{N}
    \bigl[H\!\left(Q_{v_s}^{\uparrow}\right) - (s-1)\,\Hb(p)\bigr].
    \label{eq:shift_mi}
\end{equation}
Both bounds equal zero at $p = 1/2$, consistent with strict privacy.
\end{theorem}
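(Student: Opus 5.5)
The plan is to treat Protocol~2 as a uniform mixture over positions and then use, for each leakage measure, the fact that it is convex (or sub-additive) under mixtures of channels. First I would justify~\eqref{eq:mix_dist}. Since $R$ is uniform on $[N]$ and independent of $\theta$ and of all coins, and $R\mapsto\sigma_R$ is a bijection, each fixed server $v_n$ is placed at position $s$ in the sequential order with probability exactly $1/N$, for every $s\in[N]$. Conditioned on that position, its query is generated exactly as in the biased protocol at position $s$. I would index query coordinates by the actual file labels $\{v_n,v_m\}$, so that the conditional law given $\theta$ is a well-defined channel $P_s(\cdot\mid\theta)$ on the fixed alphabet $\F^{N-1}$.

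Averaging over $s$ gives~\eqref{eq:mix_dist}, and the right-hand side does not depend on $n$. This is the one point where care is needed, and I would handle it with the vertex-transitivity of $\KN$: a relabelling of vertices that maps $v_n$ to $v_{n'}$ induces a bijection on files. Under this bijection, $P(Q_n=\cdot\mid\theta)$ is carried to $P(Q_{n'}=\cdot\mid\theta)$ with $\theta$ relabelled consistently. Because $\theta$ is uniform, every leakage functional is invariant under this relabelling, so all servers have identical leakage.

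\emph{Rate.} The download is additive over servers. Summing over all $n$, and using the fact that each position $s$ is occupied by exactly one server for every value of $R$, the expected number of non-null queries is
\[
\sum_s P(\text{position } s \text{ non-null}) = \mathbb{E}[D_{\KN}(p)],
\]
as given by~\eqref{eq:download_complete} in Theorem~\ref{thm:kn_tradeoff}.

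\emph{Maximal leakage.} For each $q$ I would use
\[
\max_m \frac{1}{N}\sum_s P_s(q\mid m)\le \frac{1}{N}\sum_s \max_m P_s(q\mid m).
\]
Summing over $q$ and taking $\log$ (which is monotone) gives $\varepsilon^{\max}_{\mathrm{shift}} \le \log\bigl(\frac1N\sum_s 2^{\varepsilon^{\max}_{v_s}}\bigr)$. I would then substitute~\eqref{eq:max_complete}. For $s=1$ the general expression evaluates to $(p+(1-2p))/(1-p)=1$, which matches the zero leakage there, so the sum can run uniformly over $s=1,\dots,N$, which is~\eqref{eq:shift_max_sum}. The closed form~\eqref{eq:shift_max} then follows from the geometric sum $\sum_{s=1}^N p^{s-1}=(1-p^N)/(1-p)$, valid for $p<1$.

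\emph{MI leakage.} Here I would invoke the convexity of $I(\theta;Q)$ in the channel $P_{Q\mid\theta}$ for a fixed input distribution; the input distribution is uniform $\theta$ in every component of the mixture. This yields $\varepsilon^{\mathrm{MI}}_{\mathrm{shift}}\le\frac1N\sum_{s=1}^N\varepsilon^{\mathrm{MI}}_{v_s}$. The $s=1$ term vanishes, and substituting~\eqref{eq:mi_complete} gives~\eqref{eq:shift_mi}.

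Finally, at $p=1/2$ each summand vanishes: the argument of the logarithm becomes $1$, and each MI term is $0$ by the Remark following~\eqref{eq:marginal}. Hence both bounds are zero, consistent with strict privacy.
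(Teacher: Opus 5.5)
Your argument has the same structure as the paper's proof, and most of it is correct. You decompose by the position of $v_n$. You bound the maximal-leakage sum using $\max_m\frac1N\sum_s P_s(q\mid m)\le\frac1N\sum_s\max_m P_s(q\mid m)$. You bound the MI by convexity in the channel, which is exactly what the paper's step $I(\theta;Q_n)\le I(\theta;Q_n,R)=I(\theta;Q_n\mid R)$ proves. You get the download by counting one server per position. These steps are correct, as are the $s=1$ check, the geometric sum for $p<1$ and the $p=1/2$ check. Your $2^{\varepsilon}$ is also the right base, where the paper writes $\exp$.

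The one step that needs repair is the symmetry argument for identical leakage. You are right that, with coordinates indexed by file labels, the mixtures for $v_n$ and $v_{n'}$ live on different alphabets, and the upstream set at a given position depends on $n$. The paper glosses over this by writing $P_{\sigma_r^{-1}(n)}(Q=q\mid\theta)$ as if the law depended only on the position. However, vertex-transitivity of $\KN$ does not close the gap: an arbitrary automorphism sending $v_n$ to $v_{n'}$ need not carry one mixture onto the other.

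For $N=4$, the upstream sets of $v_1$ over the four shifts are $\emptyset,\{v_4\},\{v_3,v_4\},\{v_2,v_3,v_4\}$, and those of $v_2$ are $\emptyset,\{v_1\},\{v_1,v_4\},\{v_1,v_3,v_4\}$. The transposition $v_1\leftrightarrow v_2$ maps the first family to $\emptyset,\{v_4\},\{v_3,v_4\},\{v_1,v_3,v_4\}$, not to the second. Concretely, take the event that $v_2$'s bits on $\{v_1,v_2\},\{v_2,v_3\},\{v_2,v_4\}$ equal $(0,0,1)$, given $\theta=\{v_2,v_4\}$. Its probability is $p^3/2$, whereas the transported law of $v_1$ gives $p^2/4+p^3/2$.

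What you need is a symmetry of the protocol, not just of the graph. Take the rotation $\tau:v_i\mapsto v_{i+c}$ (indices mod $N$) with $\tau(v_n)=v_{n'}$. Since $\tau$ and all $\sigma_r$ lie in the same cyclic group, $\tau$ maps the uniformly chosen cyclic ordering to another uniformly chosen one. It also leaves the i.i.d.\ coins and the uniform $\theta$ invariant in law. Hence $(\tau(\theta),\,Q_{v_n}\text{ with coordinates relabelled by }\tau)$ has the same law as $(\theta,Q_{v_{n'}})$. Because $\tau$ is a bijection on files and $\theta$ is uniform, every leakage functional then coincides. With this substitution your proof is complete.
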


\begin{proof}
\textit{Identical marginal distributions and rate invariance.}
For any server $v_n$, any query value $q$, and any file index $\theta$, conditioning on the shift $R$:
\begin{align}
    P(Q_n=q\mid\theta)
    &=\sum_{r=1}^{N}P(R=r)\cdot P(Q_n=q\mid\theta,R=r) \notag\\
    &=\frac{1}{N}\sum_{r=1}^{N}P_{\sigma_r^{-1}(n)}(Q=q\mid\theta).
    \label{eq:sym_expand}
\end{align}
Since $\sigma_r$ is a cyclic permutation of $[N]$, as $r$ ranges over $[N]$ the position $\sigma_r^{-1}(n)$ takes each value in $[N]$ exactly once. Hence~\eqref{eq:sym_expand} equals $\frac{1}{N}\sum_{s=1}^{N}P_s(Q=q\mid\theta)$, independent of $n$, giving every server the same marginal distribution and hence identical leakage. For the download, since each $v_n$ occupies position $s$ with probability $1/N$:
\begin{align}
    \mathbb{E}[D_{\mathrm{shift}}]
    =\sum_{n=1}^{N}\frac{1}{N}\sum_{s=1}^{N}P_s(Q\neq\mathbf{0})
    =\sum_{s=1}^{N}P_s(Q\neq\mathbf{0})
    =\mathbb{E}[D_{\KN}(p)].\notag
\end{align}

\textit{Part~(i).}
Using $\max_\theta\frac{1}{N}\sum_s f_s(\theta)\leq
\frac{1}{N}\sum_s\max_\theta f_s(\theta)$ for non-negative $f_s$:
\begin{align}
    \varepsilon_{\mathrm{shift}}^{\mathrm{max}}
    &= \log\sum_{q}\max_{\theta}
       \frac{1}{N}\sum_{s=1}^{N} P_s(q\mid\theta)
    \notag\\
    &\leq
    \log\!\left(\frac{1}{N}\sum_{s=1}^{N}
    \underbrace{\sum_{q}\max_{\theta} P_s(q\mid\theta)}_{
    =\,\exp(\varepsilon_{v_s}^{\mathrm{max}})}\right).
    \label{eq:max_ub_step}
\end{align}
Substituting $\exp(\varepsilon_{v_s}^{\mathrm{max}})=[p+p^{s-1}(1-2p)]/(1-p)$ from~\eqref{eq:max_complete} gives~\eqref{eq:shift_max_sum}. Evaluating the geometric sum $\sum_{s=1}^{N}p^{s-1}(1-2p)=(1-2p)(1-p^N)/(1-p)$ and dividing by $N(1-p)$ gives~\eqref{eq:shift_max}.

\textit{Part~(ii).}
Since $R \perp \theta$, the chain rule gives
\begin{align}
    \varepsilon_{\mathrm{shift}}^{\mathrm{MI}}
    &= I(\theta;\, Q_n)
    \;\leq\; I(\theta;\, Q_n, R)
    \notag\\
    &= \underbrace{I(\theta;\, R)}_{=\,0}
       + I(\theta;\, Q_n \mid R)
    \notag\\
    &= \frac{1}{N}\sum_{s=1}^{N} I(\theta;\, Q_{v_s})
    = \frac{1}{N}\sum_{s=2}^{N} \varepsilon_{v_s}^{\mathrm{MI}},
    \label{eq:shift_mi_proof}
\end{align}
where the last equality uses $I(\theta; Q_{v_1}) = 0$.
\end{proof}
 
\begin{remark}
\label{rem:shift_gain}
Both bounds are strictly smaller than the worst-case per-server leakage of Protocol~1. For the maximal leakage, the bound~\eqref{eq:shift_max} satisfies
$\varepsilon_{\mathrm{shift}}^{\mathrm{max}} \leq \log\!\left(\frac{1}{N} \sum_s e^{\varepsilon_{v_s}^{\mathrm{max}}}\right) \leq \varepsilon_{v_N}^{\mathrm{max}}$,
where the second inequality holds because the arithmetic mean of $e^{\varepsilon_{v_s}^{\mathrm{max}}}$ is at most its maximum. The actual per-server leakage of the mixture is strictly below the bound; for example, at $N=3$ and $p=0.7$, the exact mixture maximal leakage is $0.125$, the bound~\eqref{eq:shift_max} gives $0.307$, and the worst-case leakage under Protocol~1 is $\varepsilon_{v_3}^{\mathrm{max}} = 0.452$. These improvements are visible in Fig.~\ref{fig:comlete_graph_MaxL}, where the dotted line (Protocol~2) lies strictly below the solid curves (Protocol~1) for all $p > 1/2$. 
\end{remark}

%\pk{Say something about Fig. \ref{fig:comlete_graph_MaxL} here. }

\begin{figure}
    \centering
    \includegraphics[width=1\linewidth]{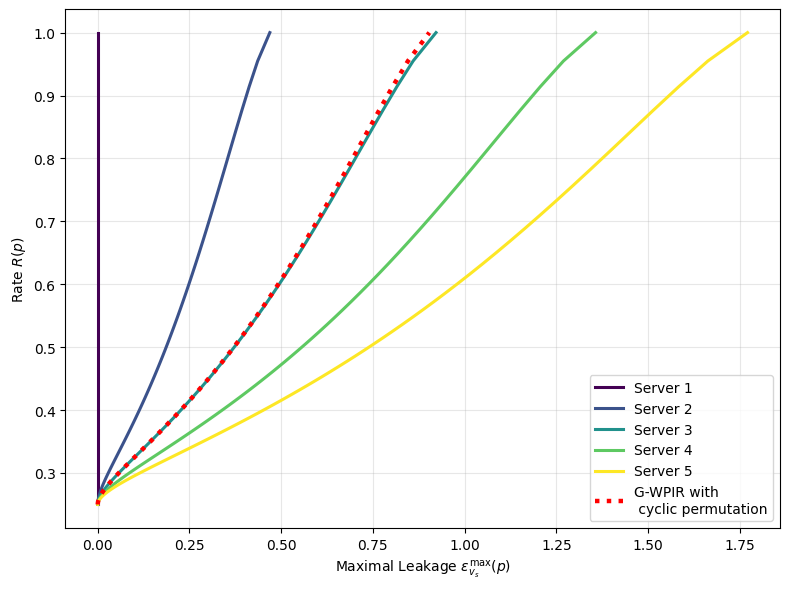}
    \caption{Rate and $\varepsilon_{v_s}^{\mathrm{MI}}$ (MI leakage) trade-offs for complete graph with $5$ vertices ($K_5$), at each server. The dotted line represents the tradeoff achieved at each server by the cyclic-shift-based protocol in Subsection \ref{subsec:cyclic_shift}.}
    % \pk{PLEASE SYNCHRONIZE NOTATIONS WITH THIS PAPER. $\mileak$ IS NOT USED. SIMILARLY $\{0,1,\hdots,N-1\}$ is not used, it is $\{1,\hdots,N\}$ I BELIEVE! INCREASE THE FONT SIZE FOR THE AXES AND LEGENDS.
    % DO IT FOR OTHER CURVE ALSO. PUT THE CURVES ON/AFTER THE PAGE THEY ARE BEING TALKED ABOUT IN THE MAIN TEXT. OTHERWISE AUDIENCE WONDERS WHY ITS HERE. }
    % } 
    \label{fig:comlete_graph_MIL}
\end{figure}

\begin{figure}
    \centering
    \includegraphics[width=1\linewidth]{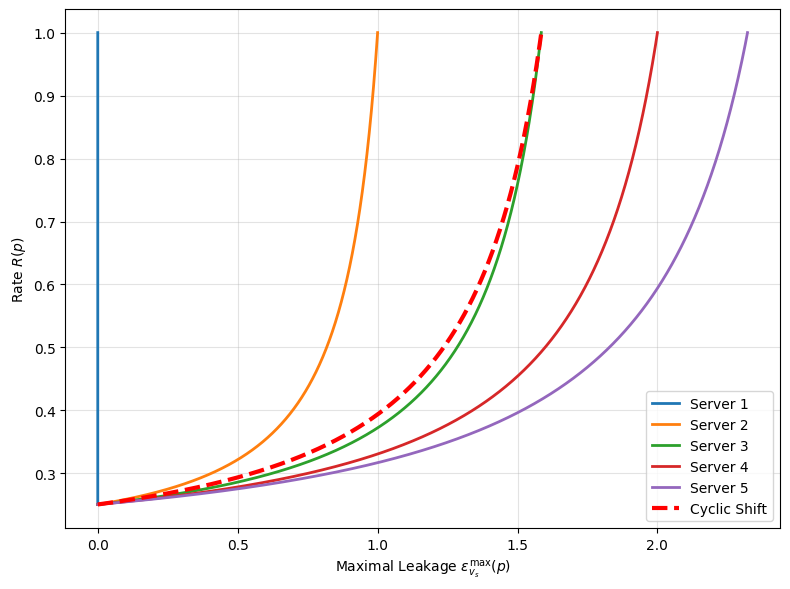}
    \caption{Rate and $\varepsilon_{v_s}^{\mathrm{max}}$ (maximal leakage) trade-offs for complete graph with $5$ vertices ($K_5$). The dotted line represents the tradeoff achieved at each server by the cyclic-shift-based protocol in Subsection \ref{subsec:cyclic_shift}.}
    \label{fig:comlete_graph_MaxL}
\end{figure}

\subsection{Complete Bipartite Graphs}
\label{subsec:bipartite}
 
We now specialise our general protocol in Section \ref{subsec:scheme} to the complete bipartite graph $K_{N_1, N_2}$, where side $A$ has $N_1$ vertices and side $B$ has $N_2$ vertices. Assume without loss of generality that $N_1 \geq N_2$.  The total number of servers is $N = N_1 + N_2$,
the total number of files is $K = N_1 N_2$, and the independence number is $\alpha(K_{N_1,N_2}) = N_1$.
 
\subsubsection{Partition Structure}
\label{subsubsec:bip_partition}
 
The sequential independent-set partition of Section~\ref{subsec:scheme} produces exactly $\kappa = 2$ sets: $\indset{1} = A$ (side $A$, $N_1$ vertices) and $\indset{2} = B$ (side $B$, $N_2$ vertices).  Since $K_{N_1,N_2}$ is bipartite, every edge crosses between the two sides, so all edges incident on $n \in \indset{1}$ are downstream and all edges incident on $n \in \indset{2}$ are upstream.  Concretely:
\begin{itemize}
  \item Every server $n \in \indset{1}$ has $\degdown = N_2$, $\degup = 0$,
        $\Un = 0$.
  \item Every server $n \in \indset{2}$ has $\degdown = 0$, $\degup = N_1$,
        $\Un = N_1$.
\end{itemize}
In particular, all $N_1$ servers on side $A$ are structurally identical (zero upstream edges, and $N_2$ downstream edges) and all $N_2$ servers on side $B$ are structurally identical ($N_1$ upstream edges each, and no downstream edges), so the leakage and download contributions depend only on which side a server belongs to.
 
\subsubsection{Leakage Expressions}
\label{subsubsec:bip_leakage}
 
\begin{theorem}[Leakage-Rate Tradeoff for $K_{N_1,N_2}$]
\label{thm:bip_tradeoff}
The biased protocol on $K_{N_1,N_2}$ with parameter $p \in [1/2,\, 1]$ satisfies:

\noindent\textit{(i) Side-$A$ servers ($n \in \indset{1}$):}
Every side-$A$ server has zero leakage, i.e.,
\begin{equation}
    \varepsilon_n^{\mathrm{MI}} \;=\; 0, \qquad
    \varepsilon_n^{\mathrm{max}} \;=\; 0.
    \label{eq:bip_leak_A}
\end{equation}

\noindent\textit{(ii) Side-$B$ servers ($n \in \indset{2}$), all equal:}
For each side-$B$ server $n$, we have the leakage metrics as follows
\begin{align}
    \varepsilon_n^{\mathrm{max}}
    &\;=\; \log\!\left[\frac{p + p^{N_1}(1-2p)}{1-p}\right],
    \label{eq:bip_max}\\
    \varepsilon_n^{\mathrm{MI}}
    &\;=\; H\!\left(\queryup{n}\right) - N_1\,\Hb(p),
    \label{eq:bip_mi}
\end{align}
where $H(\queryup{n})$ is given by~\eqref{eq:H_up} and~\eqref{eq:marginal}
with $\Un = N_1$ and $K = N_1 N_2$.

\noindent\textit{(iii) Expected download and rate:}
\begin{equation}
    \mathbb{E}\!\left[D_{K_{N_1,N_2}}(p)\right]
    \;=\; N_1(1-p) \;+\; N_2 \;-\; p^{N_1-1}\bigl[1 + p(N_2-2)\bigr],
    \label{eq:bip_download}
\end{equation}
and rate $R(p) = 1/\mathbb{E}[D_{K_{N_1,N_2}}(p)]$.
\end{theorem}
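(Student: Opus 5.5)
The plan is to obtain all three parts by specializing Theorems~\ref{thm:mi_rate} and~\ref{thm:maxl_rate} (together with the download formula~\eqref{eq:download_exact}) to the partition described in Subsection~\ref{subsubsec:bip_partition}. No new probabilistic analysis of the protocol should be needed.

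\textbf{Step 1: confirm the partition.} In $K_{N_1,N_2}$ with $N_1,N_2\ge 1$, any two vertices on opposite sides are adjacent. Hence every independent set lies entirely within one side, and since $N_1\ge N_2$ a largest independent set is side $A$, giving $\alpha(K_{N_1,N_2})=N_1$. If $N_1=N_2$, either side works and the two choices are symmetric. The induced subgraph on the remainder $B$ has no edges, so $\indset{2}=B$ and $\kappa=2$.

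\textbf{Step 2: read off the degrees.} Every $n\in\indset{1}$ has $\degdown=N_2$ and $\degup=0$. Every $n\in\indset{2}=\indset{\kappa}$ has $\degdown=0$ and $\Un=\degup=N_1$. Since $N_1\geq 1$, each $n\in\indset{2}$ has $\Un\geq 1$, as Theorem~\ref{thm:maxl_rate} requires for $n\notin\indset{1}$.

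\textbf{Step 3: leakages.} Part~(i) is the $n\in\indset{1}$ case of~\eqref{eq:mi_leakage} and~\eqref{eq:max_leakage_closed}. Part~(ii) follows by substituting $\Un=N_1$ into these formulas. For the MI leakage this uses the $\indset{\kappa}$ clause of Theorem~\ref{thm:mi_rate} (where $\Un=\degn$), with the marginal~\eqref{eq:marginal} evaluated at $K=N_1N_2$. Since all side-$B$ servers share the same $\Un$ and $K$, their leakages coincide.

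\textbf{Step 4: the download, the only real computation.} In~\eqref{eq:download_exact} the middle sum over $s\in\{2,\ldots,\kappa-1\}$ is empty because $\kappa=2$. The first term gives $N_1(1-p)$. Each of the $N_2$ servers in $\indset{\kappa}$ contributes
\begin{equation*}
1-\frac{p^{N_1}}{N_1N_2}\Bigl(N_1N_2-\frac{N_1(2p-1)}{p}\Bigr)=1-p^{N_1}+\frac{p^{N_1-1}(2p-1)}{N_2}.
\end{equation*}
Multiplying by $N_2$ gives
\begin{equation*}
N_2-N_2p^{N_1}+p^{N_1-1}(2p-1)=N_2-p^{N_1-1}\bigl[1+p(N_2-2)\bigr],
\end{equation*}
and adding $N_1(1-p)$ yields~\eqref{eq:bip_download}. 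The rate is then $R(p)=1/\mathbb{E}[D_{K_{N_1,N_2}}(p)]$ by~\eqref{eq:rate}.

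The main thing to watch is bookkeeping rather than substance. One must use the $\indset{\kappa}$ form (exponent $\Un$, not $\Un+1$) for side $B$, because side-$B$ servers have no downstream coin. One must also track the cancellation of $N_1$ against $K=N_1N_2$.
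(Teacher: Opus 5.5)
Your route is the same as the paper's: parts (i) and (ii) are read off Theorems~\ref{thm:mi_rate} and~\ref{thm:maxl_rate} with $U_n=N_1$ and $K=N_1N_2$, and part (iii) comes from the $\indset{\kappa}$ null probability~\eqref{eq:null_last}. Your Step~4 algebra is correct and is exactly the computation in~\eqref{eq:bip_null_B}.

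There is, however, one step that fails in an edge case, and the paper's proof has the same omission. In Step~2 you check only $U_n\ge 1$ before invoking Theorem~\ref{thm:maxl_rate}. Its derivation also silently needs $K>U_n$: in Case~I, \eqref{eq:max_w0}, the maximizer for $q^\uparrow=\mathbf{0}$ is taken to be some $\theta\notin\mathcal{U}_n$, which gives the value $p^{U_n}$.

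The statement only assumes $N_1\ge N_2$, so the star $K_{N_1,1}$ is allowed. There the single side-$B$ server has $U_n=N_1=K$, so every file is upstream and no such $\theta$ exists. Instead, $\max_m P(Q_n^{\uparrow}=\mathbf{0}\mid\theta=m)=P_k(\mathbf{0})=(1-p)p^{N_1-1}$. The correct sum is then $(1-p)p^{N_1-1}+\frac{p}{1-p}(1-p^{N_1})$. This is smaller than the claimed sum by $p^{N_1-1}(2p-1)>0$ whenever $p>1/2$.

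Two concrete cases show the failure:
\begin{itemize}
\item For $N_1=2$, $N_2=1$, one gets $\varepsilon_n^{\mathrm{max}}=\log(2p)$ rather than $\log[(p+p^2(1-2p))/(1-p)]$; at $p=0.7$ this is $\log 1.4$ versus $\log 1.68$.
\item For $N_1=N_2=1$, $K=1$, so the leakage is zero, not $\log(2p)$.
\end{itemize}

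So~\eqref{eq:bip_max} needs the extra hypothesis $N_2\ge 2$, or a separate formula for the star. A complete proof must verify $K>U_n$ before invoking Theorem~\ref{thm:maxl_rate}. Part (i), the MI formula~\eqref{eq:bip_mi}, and part (iii) are unaffected, because~\eqref{eq:marginal} and~\eqref{eq:null_up} remain valid when $K-U_n=0$. A minor related point: at $p=1$ the expression in~\eqref{eq:bip_max} is $0/0$ and has to be read as its limit, $\log(N_1+1)$ for $N_2\ge 2$.
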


\begin{proof}
Part~\textit{(i)} follows directly from Theorems~\ref{thm:mi_rate} and~\ref{thm:maxl_rate}, since side-$A$ servers lie in $\indset{1}$, where both theorems assign zero
leakage irrespective of $\Un$. Part~\textit{(ii)} follows by substituting $\Un = N_1$ into Theorems~\ref{thm:mi_rate} and~\ref{thm:maxl_rate}, applied to side-$B$
servers, which lie in $\indset{2} = \indset{\kappa}$. Equality of leakage across all side-$B$ servers follows from their structural identity: each has
exactly $N_1$ upstream edges and $K = N_1 N_2$.

Part~\textit{(iii)}: each side-$A$ server contributes $P(Q_n \neq \mathbf{0}) = 1-p$.
For a side-$B$ server, we have ($\degdown = 0$, $\Un = N_1$). Substituting all these
into~\eqref{eq:null_last}:
\begin{align}
    P\!\left(Q_n = \mathbf{0}\right)
    &= \frac{p^{N_1}}{N_1 N_2}\!\left[N_1 N_2 - \frac{N_1(2p-1)}{p}\right] \\
    &= \frac{p^{N_1-1}\bigl[1 + p(N_2-2)\bigr]}{N_2}.
    \label{eq:bip_null_B}
\end{align}
Summing over all $N_1 + N_2$ servers gives~\eqref{eq:bip_download}.
\end{proof}

\begin{remark}
The expected download~\eqref{eq:bip_download} satisfies
\begin{equation}
    \mathbb{E}\!\left[D_{K_{N_1,N_2}}\!\left(\tfrac{1}{2}\right)\right]
    = \frac{N_1}{2} + N_2 - \frac{N_2}{2^{N_1}},
    \qquad
    \mathbb{E}\!\left[D_{K_{N_1,N_2}}(1)\right] = 1,
    \label{eq:bip_boundary}
\end{equation}
giving $R(1/2) = (N_1/2 + N_2 - N_2/2^{N_1})^{-1}$ and $R(1) = 1$.
At $p=1/2$: $(2p-1)=0$, so $P(Q_n=\mathbf{0})=(1/2)^{N_1}$ per
side-$B$ server; using $1+(1/2)(N_2-2)=N_2/2$ gives the first expression.
At $p=1$, side-$A$ adds nothing to the total download, each side-$B$ server has
$P(Q_n \neq \mathbf{0})=1/N_2$, giving $\mathbb{E}[D]=1$.
\end{remark}

\section{Conclusion}
\label{sec:conclusion}
In this work, we have initiated the study of rate-privacy tradeoffs for PIR on graph-based storage. Clearly, multiple directions of further work exist, including obtaining information-theoretic converses for WPIR on graphs, improved schemes, and schemes for special graphs. Further, PIR for collusion of servers in graph-based storage and other generalizations remains largely open. 

\bibliographystyle{IEEEtran}
\bibliography{ref.bib}

\end{document}